\documentclass[journal, 10.5pt]{IEEEtran}
\bibliographystyle{IEEEtran}
\usepackage{amsmath}                
\usepackage{amssymb}  
\usepackage{psfrag,graphicx}
\usepackage{epsfig}
\usepackage{tikz}
\usepackage{multicol}
\usepackage{blindtext}        
\newtheorem{theorem}{\textbf{Theorem}}

\newtheorem{props}{\textbf{Proposition}}
\newtheorem{defin}{\textbf{Definition}}

\newtheorem{problem}{\textbf{Problem}}
\newtheorem{cor}{\textbf{Corollary}}

\def\R{\mathbb{R}}
\def\E{\mathbf{E}}

\def\F{\mathbf{F}}
\def\G{\mathbf{G}}
\def\H{\mathbf{H}}
\def\B{\mathbf{B}}

\def\N{\mathcal{N}}
\def\QED{~\rule[-1pt]{6pt}{6pt}\par\medskip}
\newenvironment{proof}{{\bf Proof.\ }}{ \hfill \QED}  

\title{\LARGE \bf Optimal Communication of States of Dynamical Systems over Gaussian Channels with Noisy Feedback: The Scalar Case}
\author{Ather Gattami\\
Ericsson Research\\
F\"ar\"ogatan 6, Stockholm, Sweden.\\ 
E-mail: ather.gattami@ericsson.com\\[8mm]
}

\begin{document}
\maketitle

\begin{abstract}
We consider the problem of communicating the state of a dynamical system via
a Shannon Gaussian channel. The receiver,
which acts as both a decoder and estimator, observes the noisy measurement of the channel
output and makes an optimal estimate of the state of the dynamical system in
the minimum mean square sense. Noisy feedback from the receiver to the transmitter is present. The transmitter observes the noise-corrupted feedback message from the receiver together with a possibly noisy measurement of the state the dynamical system. These measurements are then used to encode the message to be transmitted over a noisy Gaussian channel,
where a per symbol power constraint is imposed on the transmitted message.
Thus, we get a mixed problem of Shannon's source-channel coding problem and a sort of Kalman filtering problem. In particular, we consider two feedback instances, one being feedback of receiver measurements and the second being the receiver's state estimates.
We show that optimal encoders and decoders are linear filters with a finite memory and we give explicitly the state space realizations of the optimal filters. For the case where the transmitter has access to noisy measurements of the state, we derive a separation principle for the optimal communication scheme.  Furthermore, we investigate the presence of noiseless feedback or no feedback from the receiver to the transmitter. Necessary and sufficient conditions for the existence of a stationary solution are also given for the feedback cases considered.
\end{abstract}


\section*{Notation}
\begin{tabular}{ll}
$x^t$		& $x^t = (x(0), x(1), ..., x(t))$.\\
$\mathbb{L}$ & The set of lower triangular matrices.\\
$\B$  	& Denotes the backward shift operator,\\ 
		& $x(t-1) = \B x(t)$.\\			
$\mathbf{E}\{\cdot\}$ 	& $\mathbf{E} \{x\}$ denotes the expected value of the\\ 
			& stochastic variable $x$.\\
$\mathbf{E}\{\cdot|\cdot\}$ 	& $\mathbf{E} \{x|y\}$ denotes the expected value of the\\
			& stochastic variable $x$ given $y$.\\
$\mathbf{cov}$ & $\mathbf{cov}\{x,y\} = \E\{xy^\intercal \}$.\\
$h(x)$ 		& Denotes the entropy of $x$.\\
$h(x|y)$ 		& Denotes the entropy of $x$ given $y$.\\
$I(x;y)$ 		& Denotes the mutual information between\\
			& $x$ and $y$.\\

$\mathcal{N}(m,V)$  & Denotes the set of Gaussian variables with\\ 
				& mean $m$ and covariance $V$.\\
\end{tabular}


\begin{figure}
\label{general}
	\center
  	\includegraphics[width = 1.2\columnwidth]{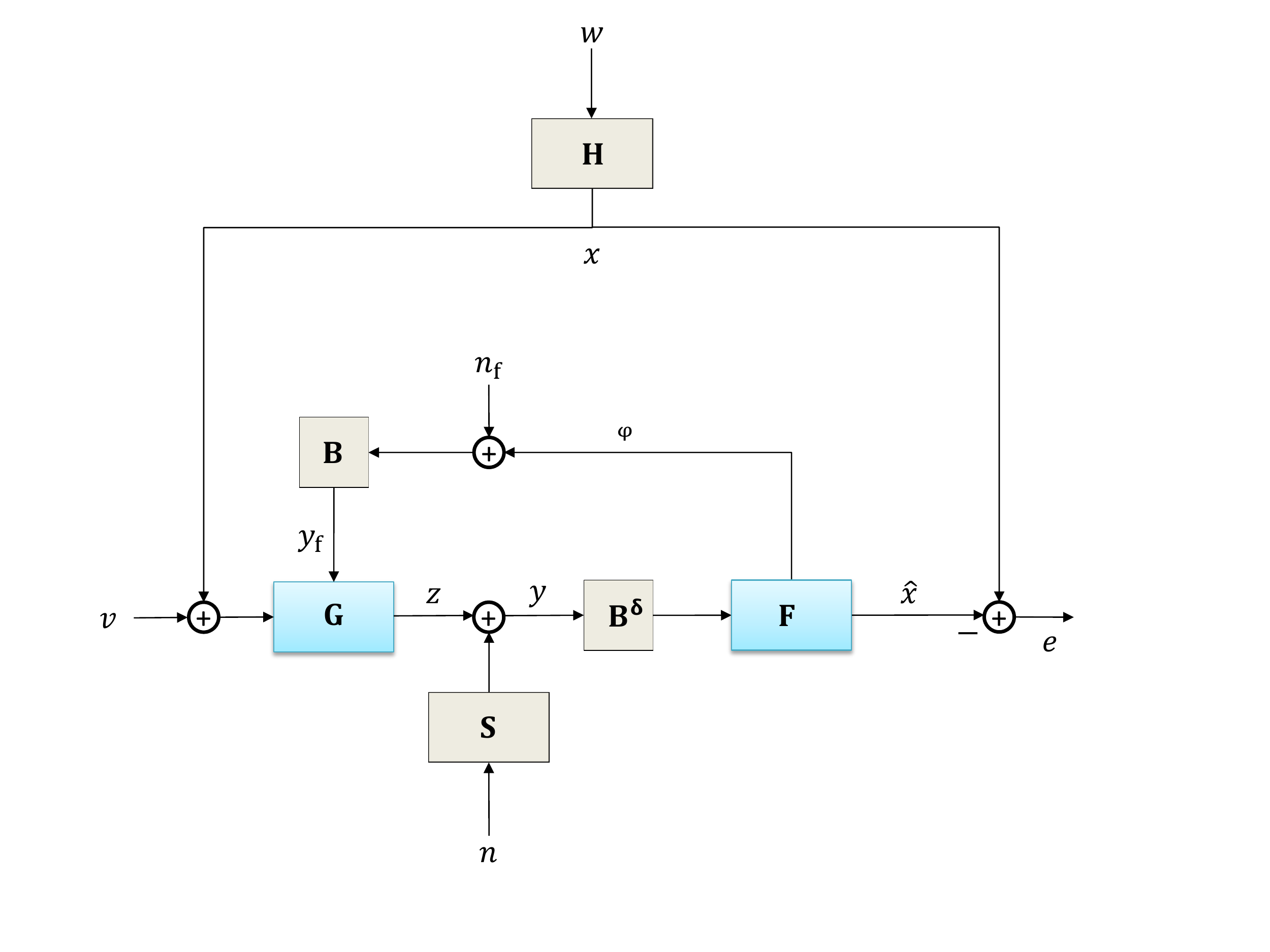}
	\caption{A simple model of an estimation problem of the state of the dynamical system $\H$ over a Gaussian communications channel with Gaussian noise $n\sim\mathcal{N}(0,N)$, Gaussian noise
	$n_\textup{f} \sim\mathcal{N}(0,N_\textup{f})$ for the feedback channel, the coloring filter $\mathbf{S}$ of the measurement noise $n$, and delay given by the backward shift operator 
	$\B$. The optimization parameters are given by the encoder $\G$ and the decoder $\F$. The symbols of the encoder output $z$ are power limited with $\E|z(t)|^2 \leq P$. }
\end{figure}

\section{Introduction}
\subsection{Background}
Many problems in practice require state estimation of a dynamical system where the possibly noisy state measurements at one end are transmitted over a noisy communciation another end where the state estmation is to be performed.

Shannon \cite{shannon:48, shannon1949} considered the problem of reliable communication of a one-dimensional source over a one-dimensional Gaussian channel. In particular, Shannon considered the  following coding-decoding setting for an analog Gaussian channel:
$$
\inf_{\substack{ f:\R\rightarrow \R\\ g:\R\rightarrow \R\\ \E |g(x)|^2 \leq P}}
\E|x - f(g(x)+n)|^2 
$$
where $x\sim \mathcal{N}(0,X)$, $n\sim \mathcal{N}(0,N)$, and $f, g$ are arbitrary functions with $\E |g(x)|^2 \leq P$. Shannon showed
that the infimum can be attained by using linear encoder and decoder $g$ and $f$, respectively.
The generalization of Shannon's result to higher dimensions is still open and there are examples where linear coding and decoding strategies might not be optimal \cite{pilc:1969}.

An important generalization of Shannon's AWGN channel is the case when the message $x$ to be estimated is the state of a given linear dynamical system driven by process noise. For instance, this problem arises in video-streaming over a wireless channel. A video stream consists of highly correlated information described by a dynamical system due to the correlation between the sequential picture frames. This is an instance of the general MIMO communcation problem with causality constraints, which adds structure to the problem. 
Another generalization is when the measurement noise is colored with the coloring filter given by a linear filter $\mathbf{S}$, see Figure \ref{general} for an illustration of the generalized communication system.


\begin{figure}
	\center
  	\includegraphics[width = 1\columnwidth]{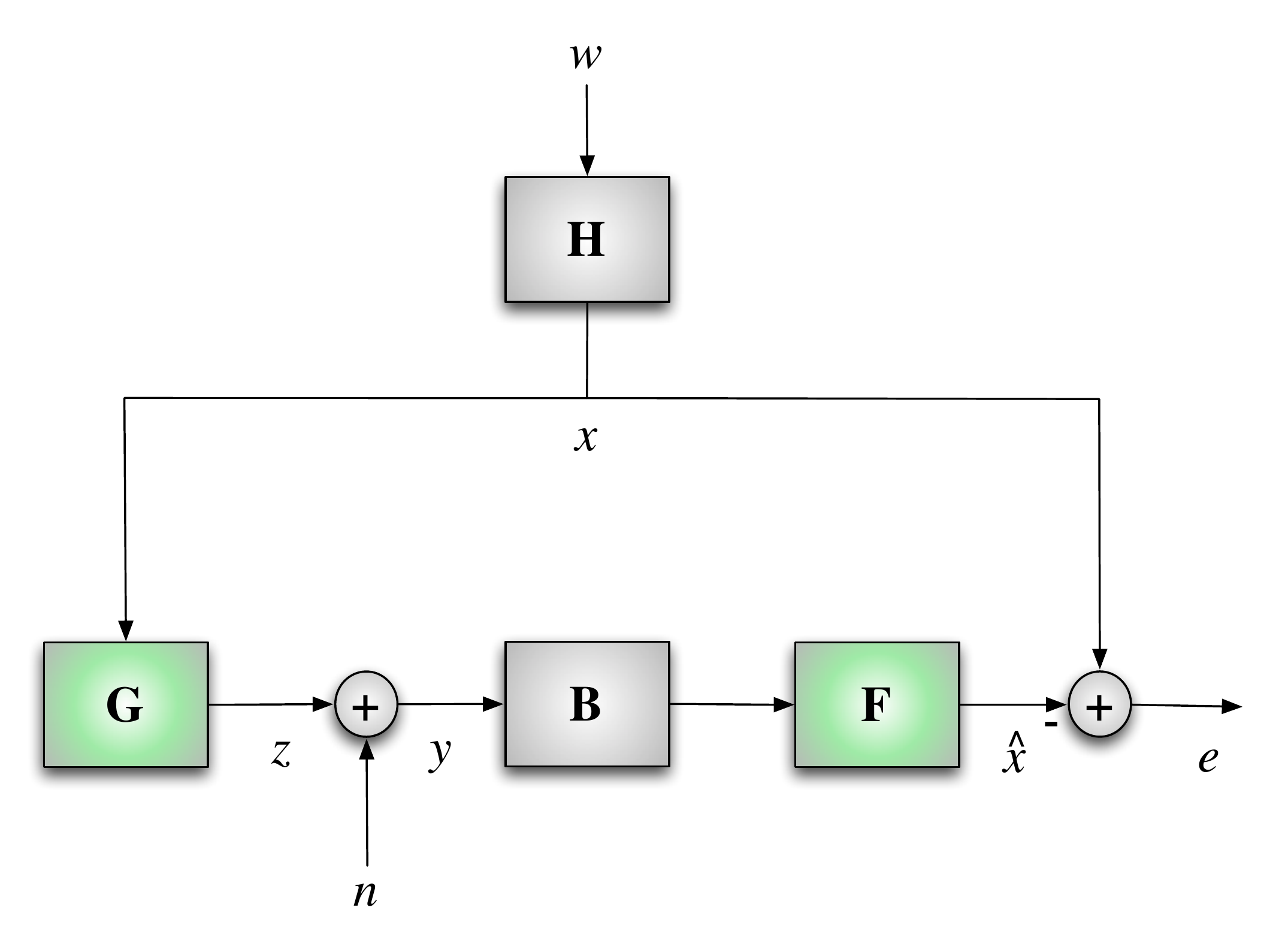}
	\caption{A simple model of an estimation problem of the state of the dynamical system $\H$ over a Gaussian communications channel with Gaussian noise $n\sim\mathcal{N}(0,N)$ and delay given by the backward shift operator 
	$\B$. The optimization parameters are given by the encoder $\G$ and the decoder $\F$. The samples of the encoder output $z$ are power limited with $\E|z(t)|^2 \leq P$. }
	\label{noFB}
\end{figure}


More specifically, consider the block-diagram in Fig. \ref{noFB}. We have the process noise given by $w$, which is assumed to be Gaussian white noise, and the state
is given by $x = \H w$ where $\H$ is a causal linear operator/filter. \\
The precoder is given by the causal operator
$\G$, not necessarily linear. The encoded signal $z = \G x$ is then transmitted over a Gaussian channel with white noise given by $n$. Typically, one has power constraints on the transmitted signal $z(t)$, that is $\E |z(t)|^2 \leq P$, for some positive real number $P$. At the other end, the message received is $y(t) = z(t)+n(t)$, for $t=0, ..., T-1$, and is delayed with $d$ time steps by the backward shift operator $\B$. Finally, the causal operator $\F$ is the decoder, designed to reconstruct the state $x$ by $\hat{x}=\F \B y$, to minimize the mean squared error $\E|e|^2 = \E |x-\hat{x}|^2$. 

For the case where $\G$ is a fixed linear operator, the optimal filter $\F$ is well known to be given by the optimal Kalman filter, which is a linear operator. However, if $\G$ is a precoder to be co-designed together with $\F$, we get a nonconvex problem even if we restrict the optimization problem to be carried out over linear operators/filters. To this date, it's not known if linear filters are optimal, and whether the order of the linear optimal filters is finite for the general MIMO case.


\begin{figure}
	\center
  	\includegraphics[width = 1\columnwidth]{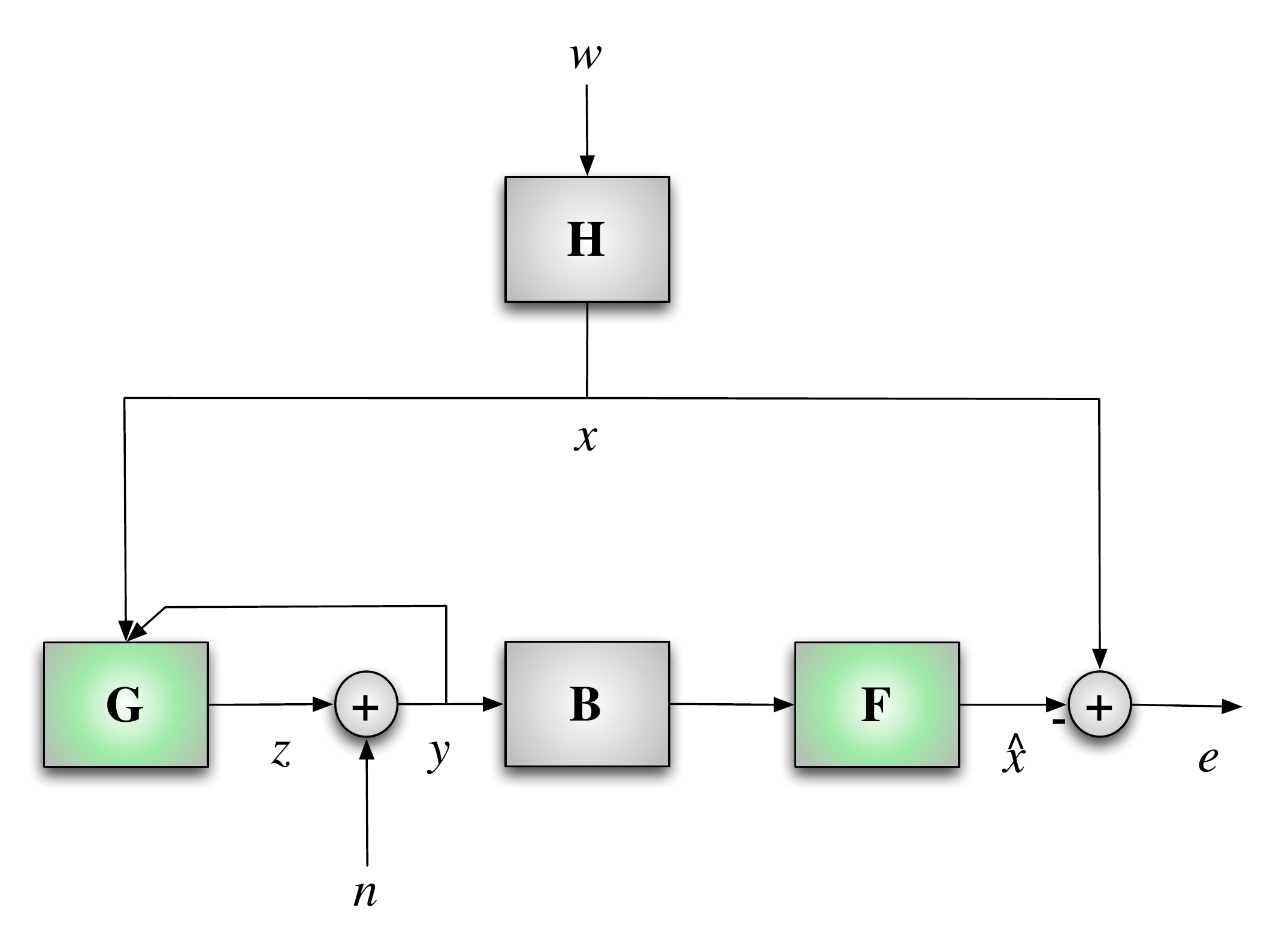}
	\caption{A simple model of a filtering problem over a Gaussian communications channel with noiseless feedback.}
	\label{noiselessFB}
\end{figure}

\subsection{Previous work}

Kalman \cite{kalman:1960} made a fundamental contribution to optimal control and filtering of linear dynamical systems by deriving recursive state space solutions. The model considered by Kalman assumes given linear measurements of the state, possibly partial and corrupted by noise. 
The solution relies on an orthogonality prinicple, where the filter update is based on an innovations process representing information that is orthogonal to the state estimate of the filter. 

The problem of optimal state estimation used for control of scalar dynamical systems was considered in \cite{bansal:basar:1988}, where noiseless feedback of the measurements at the receiver is present at the transmitter(see Figure \ref{noiselessFB}) and it was shown that linear filters where optimal. The role of a communication channel \textit{with feedback} and its effect on \textit{stability} was studied in \cite{tatikonda:2004} and necessary conditions for stability were given for linear time-invariant channels and that for time-varying channels was given in \cite{minero:2009}. Fundamental limitations of performance with sensitivity functions as a measure were studied in \cite{martins:2008}. The problem of communication and filtering over a noisy channel for the stationary case has been considered in \cite{joh10acc} where it was shown that this problem can be transformed to a convex optimization problem that grows with the size of the time horizon. However, the order of the  linear optimal filters obtained from \cite{joh10acc} is infinite. 

In another direction, \cite{kim:2010} studied the problem of source-channel coding over a communciation channel with colored noise with the correlation given by a linear filter $\mathbf{S}$, as depicted in Figure \ref{kim}. Here, the filter $\H$ is the identity(so $x=w$), $v=0$, and $\G$ encodes the information given by $w$ by using information of the measurements (with delay $d=1$) at the receiver through \textit{noiseless feedback}. Although the problem in \cite{kim:2010} considered maximizing the channel capacity, it was equivalent to the problem of minimizing the mean squared error of the state estimate as shown in Figure \ref{kim}. Also here, the solution relied on a sort of orthogonality principle where the transmitted information is orthogonal to that available at the receiver.

In \cite{gattami:ifac:2014}, preliminary results(with incomplete proofs) were given for the special case of communication and estimation without feedback for the scalar case as depicted in Figure \ref{noFB}. In all previous work, except \cite{bansal:basar:1988, joh10acc, gattami:ifac:2014}, average power constraints were assumed. Per symbol power constraints were considered in \cite{bansal:basar:1988, joh10acc, gattami:ifac:2014}.

\begin{figure}
	\center
  	\includegraphics[width = 1.2\columnwidth]{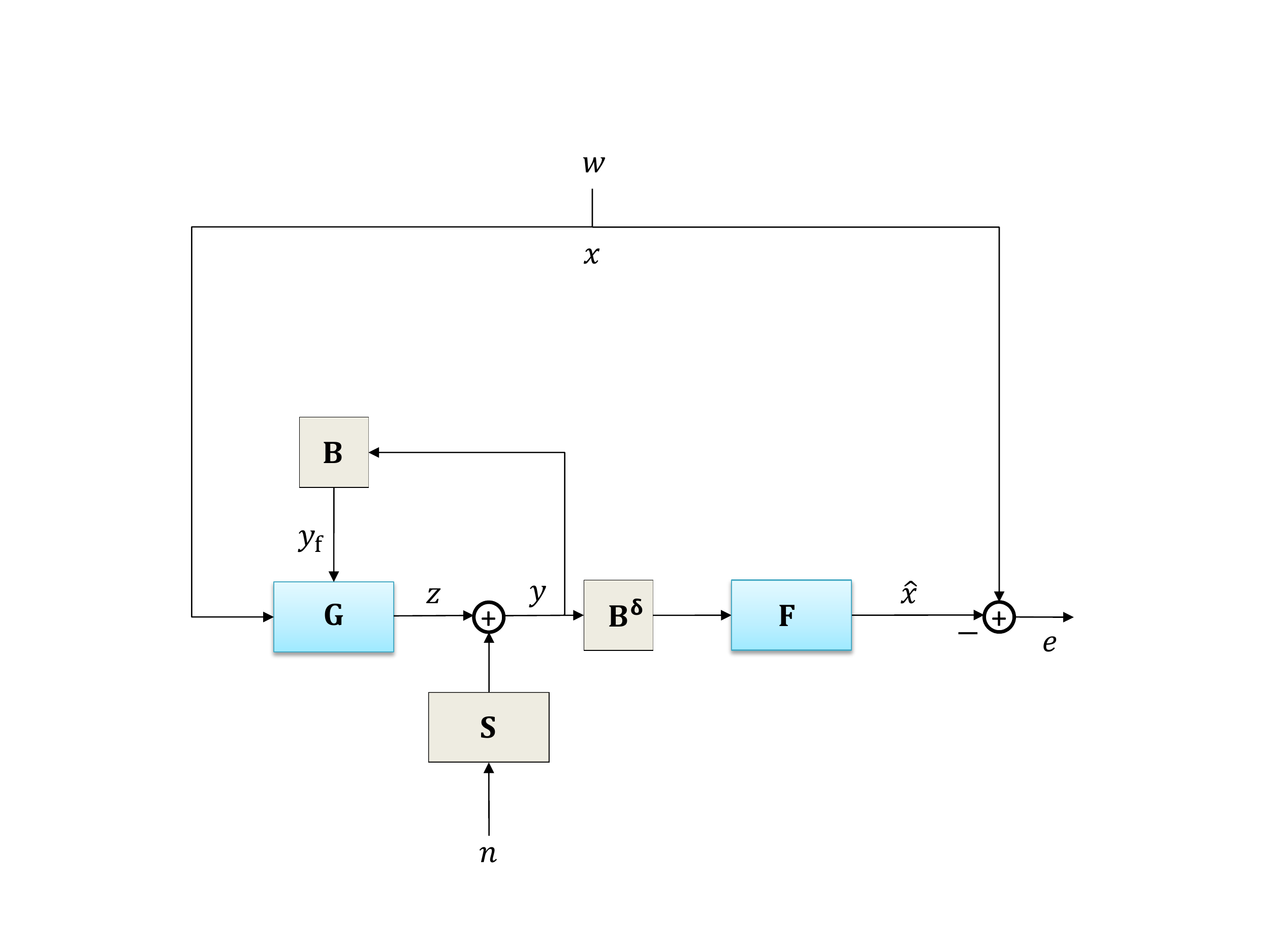}
	\caption{A simple model of a filtering problem over a Gaussian communications channel with noiseless feedback.}
	\label{kim}
\end{figure}

\subsection{Contributions}

We consider the linear dynamical system $\mathbf{H}$ given by
\begin{align} 
	&x(t+1) 	= ax(t) + bw(t)  \nonumber \\
	&x(0)=x_0, ~~~0\leq t\leq T-1. \nonumber
\end{align}

The main contributions of this paper is to derive the structure and explicit expressions of the optimal communication schemes as described in figures \ref{noS}, and \ref{yfb} and \ref{xhatfb} respectively, where \textit{noisy} feedback is present from the receiver side to the transmitter. We 
show that the optimal filters $\F$ and $\G$ are linear and have a finite memory independent of the size of the time horizon. In particular, we consider \textit{per symbol} power constraints on the transmitter signal as opposed to the average power constraints considered in the literature.
We show explicitly that the state space realizations of the optimal filters (for the case of full state measurement at the trasnmitter with delay at the receiver given by $\delta=1$) are given by

\begin{equation*}
 \G: \hspace{1cm} 
 \left\{\begin{aligned}
	s(t+1) 			&= a s(t) + K(t) (z(t) +  \hat{n}(t)) \\
	\hat{n}(t) 		&= \frac{N}{N + N_\textup{f}} (y_\textup{f}(t) - z(t))\\
	\check{x}(t) 	&=  x(t) - s(t)\\
 	z(t) 			&= \frac{\sqrt{P}}{\sigma_t} \check{x}(t),\\
 \end{aligned}\right.
 \end{equation*}
  
\begin{equation*}
 \F: \hspace{1.3cm} 
 \begin{aligned}
	\hat{x}(t+1) 	&= a\hat{x}(t) +  K(t) y(t)\\
 \end{aligned}
 \end{equation*}
 with $\E n^2(t) = N$, $\E n_\textup{f}^2(t) = N_\textup{f}$, 
$\sigma_t^2=  \E \check{x}^2(t)$,  
 $K(t) = a \sigma_t \sqrt{P}(P+N)^{-1}$, and $s(0) = 0$.

The interpretation of the state space equations is the following. 
$s(t) = \E\{\hat{x}(t) | x^{t}, y_\textup{f}^{t-1}\}$ is the estimate at the transmitter of the estimate $\hat{x}(t)$ at the decoder. The transmitter's estimate of $e(t)$ is 
$\check{x}(t) = \E\{e(t)  | x^{t}, y_\textup{f}^{t-1} \} = x(t) - s(t)$. This estimate is then transmitted over the Gaussian channel, in order to supply the decoder with the innovations(the incremental information the decoder needs to correct its estimate of $x(t)$).

We show that the error $e(t)$ may be stationary if and only if $|a|<1$. Then, we consider  the filtering problem over a communication channel, where noiseless feedback is introduced from the channel output to the precoder as depicted in Figure \ref{noiselessFB}. We show that the optimal transmitter and receiver are given by
  \begin{equation}
 \label{nofbeq}
 \begin{aligned}
	 \hat{x}(t) 	&= a \hat{x}(t) + K(t) y(t)\\
 	\tilde{x}(t) 	&= x(t)-\hat{x}(t)\\
 		z(t) 		&= \frac{\sqrt{P}}{\sigma_t} \tilde{x}(t),\\
 \end{aligned}
 \end{equation}
with 
 $$
 K(t) = a \frac{\sigma_t\sqrt{P}}{P+N},
 $$
and $\sigma_t^2=\E|\tilde{x}^2(t)|^2$ given by $\sigma_0^2 = \E x_0^2=V_{xx}(0)$ and
\begin{equation*}
  \sigma_t^2  = \frac{N}{N+P} \cdot a^2 \sigma_{t-1}^2 + b^2.
 \end{equation*}
Furthermore, we show that the error variance $e_t^2$ is bounded as $t\rightarrow \infty$ if and only if 
$$
\log_2(|a|) < C
$$
where $C$  is the capacity of the Gaussian channel from the transmitter to the receiver which is similar to previously published results in the context of stabilization of control system over communication channels \cite{tatikonda:2004}. We also consider the problem of communication under noisy feedback of the decoder's state estimates at the transmitter (see Figure \ref{xhatfb}). We find explicitly the optimum filter pair which is given by
\begin{equation*}
 \begin{aligned}
	\hat{x}(t+1) 		&= a\hat{x}(t) +  K(t) y(t)\\
	\check{x}(t+1)
			&= aN(P+N)^{-1} \check{x}(t) + x(t+1)-ax(t)  \\
			& ~~~+ a\bar{\sigma}_t^2 (\bar{\sigma}_t^2 + N_\textup{f})^{-1}(x(t) - \check{x}(t) - y_\textup{f}(t))\\ 		z(t) 			&= \frac{\sqrt{P}}{\sigma_t} \check{x}(t),\\
 \end{aligned}
 \end{equation*}
where $\sigma_t^2 =   \E \check{x}^2(t)$ and 
$\bar{\sigma}_t^2 = \E x^2(t) - \E\hat{x}^2(t) - \sigma_t^2$. We show that the estimation error is bounded as $t\rightarrow \infty$ if and only if the there exists a solution to the systems of nonlinear equations
\begin{equation*}
\begin{aligned}
		\sigma^2 = \frac{a^2 N^2}{(P + N)^{2}}\sigma^2 + \frac{a^2}{\bar{\sigma}^2 + N_\textup{f}}\bar{\sigma}^4 + b^2
\end{aligned}
\end{equation*}
and
\begin{equation*}
\begin{aligned}
		\bar{\sigma}^2 = \frac{a^2 N_\textup{f}^2}{\bar{\sigma}^2 + N_\textup{f}}\bar{\sigma}^2 + \frac{a^2PN}{(P + N)^2}\sigma^2 
\end{aligned}
\end{equation*}
The above equations are equivalent to a system of fourth order polynomial equations in two variables which can be solved efficiently using standard numerical tools.

\section{Preliminaries}

\begin{defin} 
The entropy of a real-valued stochastic variable $X$ with probability distribution $p(x)$ is defined as
$$
h(X) = -\int_{-\infty }^{\infty } p(x) \log_2 p(x) dx
$$
\end{defin}

\begin{defin}
For two real valued stochastic variables $X$ and $Y$, the conditional entropy of $X$ given $Y$ is
defined as
$$
h(X|Y) = h(X,Y) - h(Y).
$$
\end{defin}

\begin{defin}
The mutual information between $X$ and $Y$ is defined as
$$
 I(X,Y) = h(X) - h(X|Y) = h(Y) - h(Y|X).
$$
\end{defin}

\begin{props}[Entropy Power Inequality]
\label{entpineq}
	If $X$ and $Y$ are independent scalar random variables, then
	$$
	2^{2h(X+Y)} \geq 2^{2h(X)} + 2^{2h(Y)}
	$$
	with equality if $X$ and $Y$ are Gaussian stochastic variables. 
\end{props}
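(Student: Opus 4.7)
The plan is to prove the entropy power inequality by Stam's classical argument via Fisher information and the heat semigroup. First I would reformulate the claim in terms of the \emph{entropy power} $N(X) = (2\pi e)^{-1} 2^{2h(X)}$, so that the statement becomes $N(X+Y) \geq N(X) + N(Y)$. The equality case for independent Gaussians is then immediate: if $X \sim \mathcal{N}(0, \sigma_X^2)$ and $Y \sim \mathcal{N}(0, \sigma_Y^2)$, then $N(X) = \sigma_X^2$, $N(Y) = \sigma_Y^2$, and $X + Y \sim \mathcal{N}(0, \sigma_X^2 + \sigma_Y^2)$, giving $N(X+Y) = \sigma_X^2 + \sigma_Y^2$.

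For the nontrivial direction, I would introduce Gaussian smoothing. Let $Z_1, Z_2$ be independent $\mathcal{N}(0,1)$ variables, independent of $(X,Y)$, and set $X_t = X + \sqrt{t}\, Z_1$, $Y_t = Y + \sqrt{t}\, Z_2$ for $t \geq 0$. Two classical ingredients drive the argument: (i) \emph{de Bruijn's identity} $\frac{d}{dt} h(X_t) = \tfrac{1}{2} J(X_t)$, where $J(U) = \int (p_U'(u))^2 / p_U(u)\, du$ is the Fisher information of $U$ with respect to a location parameter; and (ii) \emph{Stam's Fisher information inequality}, which states that for independent $U,V$,
$$
\frac{1}{J(U+V)} \geq \frac{1}{J(U)} + \frac{1}{J(V)},
$$
with equality iff both are Gaussian. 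Identity (i) follows directly from the fact that the density of $X_t$ satisfies the heat equation; inequality (ii) follows from writing the score of $U+V$ as a conditional expectation of the individual scores and applying Cauchy--Schwarz.

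With these in hand, I would couple the two noise levels through a single parameter $s$, differentiate the gap $N(X_s + Y_s) - N(X_s) - N(Y_s)$ in $s$, and use (i) together with (ii) to show that the derivative has a definite sign. As $s \to \infty$, $X_s$ and $Y_s$ are dominated by the added Gaussians of growing variance, so $X_s$, $Y_s$, and $X_s + Y_s$ are all asymptotically Gaussian and the gap tends to zero; combined with the sign of the derivative, one recovers $N(X+Y) \geq N(X) + N(Y)$ at $s = 0$. The main obstacle is establishing the Fisher information inequality (ii) together with its equality case, since the Cauchy--Schwarz step must be combined with the characterization of Gaussians as the unique distributions whose score function is linear; once (i)--(ii) and their equality conditions are in place, the monotonicity-plus-limit argument is essentially routine bookkeeping.
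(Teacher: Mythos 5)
The paper does not actually prove this proposition; it simply cites Cover and Thomas (pp.\ 674--675), so there is no in-paper argument to match against. Your sketch reconstructs the classical Stam--Blachman proof, which is one of the standard routes to the EPI (and closely related to the machinery developed in the cited reference, which builds up de Bruijn's identity and Fisher information before stating the EPI). The two ingredients you isolate --- de Bruijn's identity and the convolution inequality $1/J(U+V)\ge 1/J(U)+1/J(V)$ --- are exactly the right ones, and your treatment of the equality case for Gaussians via $N(X)=\sigma_X^2$ is correct and is all the proposition claims (it asserts equality \emph{if} Gaussian, not \emph{only if}).

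Two details deserve care before this would count as a complete proof. First, with entropy in bits, de Bruijn's identity picks up a factor of $1/\ln 2$: $\tfrac{d}{dt}h(X_t)=\tfrac{1}{2\ln 2}J(X_t)$; this is harmless but must be carried consistently. Second, and more substantively, the final limiting step as you state it does not quite close: as $s\to\infty$ one has $N(X_s)\sim s$, $N(Y_s)\sim s$, and $N(X_s+Y_s)\sim 2s$, so the \emph{gap} $N(X_s+Y_s)-N(X_s)-N(Y_s)$ is only $o(s)$ and need not tend to zero. The standard repair is to show instead that the \emph{ratio} $\bigl(N(X_t)+N(Y_t)\bigr)/N(X_t+Y_t)$ is nondecreasing in $t$ and tends to $1$ as $t\to\infty$ (or, as in Stam's original argument, to reparametrize the two noise levels by separate functions chosen so that the derivative of the gap has a definite sign along the path). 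With that adjustment, and with the Cauchy--Schwarz step in the Fisher information inequality written out, your outline becomes a correct and self-contained proof, which is more than the paper itself provides.
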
 
\begin{proof}
	See \cite{cover:2006}, p. 674 - 675.
\end{proof}

\begin{defin}
	Random variables $X, Y, Z$ are said to form a Markov chain in that order if the conditional 
	distribution of $Z$ depends only on $Y$ and conditionally independent of $X$. This is denoted by
	$X\rightarrow Y \rightarrow Z$.
\end{defin}

\begin{props}[Data-Processing Inequality]
\label{dataproc}
	If $$X\rightarrow Y \rightarrow Z,$$ 
	then $$I(X;Z)\leq I(Y;Z).$$	
\end{props}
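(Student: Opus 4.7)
The plan is to expand the joint mutual information $I(X,Y;Z)$ in two different ways using the chain rule for mutual information, exploit the Markov chain assumption to annihilate one of the resulting conditional terms, and conclude via non-negativity of (conditional) mutual information.

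First I would write down the chain-rule identity
$$
I(X,Y;Z) \;=\; I(Y;Z) + I(X;Z\mid Y) \;=\; I(X;Z) + I(Y;Z\mid X),
$$
which follows directly from the entropy-based definitions given above by expanding $h(X,Y,Z)$ in the two symmetric ways (once grouping $(X,Y)$ as $Y$ plus $X\mid Y$, once as $X$ plus $Y\mid X$). Next, I would invoke the Markov hypothesis $X\to Y \to Z$: by the definition supplied in the paper, the conditional distribution of $Z$ depends only on $Y$, so $p(z\mid x,y) = p(z\mid y)$ and hence $h(Z\mid X,Y) = h(Z\mid Y)$, i.e.\ $I(X;Z\mid Y) = 0$. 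Substituting this into the identity above yields $I(Y;Z) = I(X;Z) + I(Y;Z\mid X)$.

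Finally, since conditional mutual information is non-negative (it is an expectation of a Kullback--Leibler divergence, or equivalently the non-negative difference $h(Y\mid X) - h(Y\mid X,Z) \geq 0$ from conditioning reducing entropy), we have $I(Y;Z\mid X) \geq 0$, whence $I(X;Z)\leq I(Y;Z)$, as claimed.

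The only real subtlety is picking the right chain-rule decomposition: of the two symmetric expansions of $I(X,Y;Z)$, only the one conditioning on $Y$ in the ``extra'' term is killed by the Markov property, and this is precisely what forces the inequality to go in the direction $I(X;Z)\leq I(Y;Z)$ rather than the reverse. Once this observation is made, the argument reduces to two lines of chain-rule bookkeeping plus non-negativity, and contains no genuine technical obstacle.
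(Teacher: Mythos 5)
Your proof is correct and is the standard chain-rule argument: the paper itself offers no proof, only a citation to Cover and Thomas, and your expansion of $I(X,Y;Z)$ in two ways, killing $I(X;Z\mid Y)$ via the Markov property and using non-negativity of $I(Y;Z\mid X)$, is precisely the argument given in that reference (adapted to the form $I(X;Z)\leq I(Y;Z)$ stated here). No issues.
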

\begin{proof}
	See \cite{cover:2006}, p. 34-35.
\end{proof}

\begin{props}
\label{lse}
Let  $X$ and $Y$ be two stochastic variables. The optimal solution to the optimization problem
$$
\inf_{f(\cdot)} \E |X-f(Y)|^2
$$
is unique and given by the expectation of $X$ given $Y$
\[
\begin{aligned}
f_\star(Y) = \E\{X | Y\} .
\end{aligned}
\]
Furthermore, $f_\star(Y)$ and $X-f_\star(Y)$ are uncorrelated.  

\end{props}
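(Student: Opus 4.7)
The plan is to prove this by the standard ``add and subtract'' trick together with the tower property of conditional expectation. Write
\[
X - f(Y) = \bigl(X - \E\{X|Y\}\bigr) + \bigl(\E\{X|Y\} - f(Y)\bigr),
\]
square both sides, and take expectations. The objective splits as
\[
\E|X - f(Y)|^2 = \E\bigl|X - \E\{X|Y\}\bigr|^2 + \E\bigl|\E\{X|Y\} - f(Y)\bigr|^2 + 2C,
\]
where $C = \E\bigl[\bigl(X - \E\{X|Y\}\bigr)\bigl(\E\{X|Y\} - f(Y)\bigr)\bigr]$ is the cross term.

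The key step is to show that $C = 0$. For this I condition on $Y$ and use the tower property: since $\E\{X|Y\} - f(Y)$ is $\sigma(Y)$-measurable, it can be pulled out of the inner conditional expectation, giving
\[
\E\bigl[\bigl(X - \E\{X|Y\}\bigr)\bigl(\E\{X|Y\} - f(Y)\bigr) \bigm| Y\bigr] = \bigl(\E\{X|Y\} - f(Y)\bigr) \cdot \bigl(\E\{X|Y\} - \E\{X|Y\}\bigr) = 0,
\]
and taking expectations again yields $C = 0$.

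With the cross term gone, the first summand $\E|X - \E\{X|Y\}|^2$ is independent of $f$, while the second summand $\E|\E\{X|Y\} - f(Y)|^2$ is nonnegative and vanishes if and only if $f(Y) = \E\{X|Y\}$ almost surely. This yields both optimality and uniqueness (a.s.). For the second statement, note that $f_\star(Y)$ being $\sigma(Y)$-measurable, the same tower argument applied with $g(Y) := f_\star(Y)$ in place of $\E\{X|Y\} - f(Y)$ gives
\[
\E\bigl[\bigl(X - f_\star(Y)\bigr) f_\star(Y)\bigr] = \E\bigl[f_\star(Y)\,\E\{X - f_\star(Y) \mid Y\}\bigr] = 0,
\]
so $f_\star(Y)$ and $X - f_\star(Y)$ are uncorrelated.

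There is essentially no obstacle here; the only subtle point is the measurability argument used to pull $\E\{X|Y\} - f(Y)$ outside the inner conditional expectation, which is what makes the cross term vanish and is the single engine driving both the optimality and the orthogonality claims.
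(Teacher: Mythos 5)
Your proof is correct and complete. The paper itself does not prove this proposition --- it simply cites Shiryaev (p.~237) --- and your argument is exactly the standard $L^2$-projection proof one finds there: the add-and-subtract decomposition, the tower property to kill the cross term, and the observation that the residual term $\E|\E\{X|Y\}-f(Y)|^2$ vanishes iff $f(Y)=\E\{X|Y\}$ almost surely, with the same conditioning argument giving the orthogonality claim. The only (implicit) hypothesis worth flagging is that $\E X^2<\infty$, so that $\E\{X|Y\}$ is well defined as the $L^2(\sigma(Y))$ projection and all the expectations you manipulate are finite; with that understood, nothing is missing.
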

\begin{proof}
Consult (\cite{shiryaev}, p. 237).
\end{proof}
\begin{props}
\label{logerror}
Consider the stochastic variables $X$ and $Y$, and let the estimation error of $X$ based on $Y$
be given by
$$
\tilde{X} = X - \E\{X|Y\}.
$$
Then,
\begin{equation}
	\label{lowerb}
		\frac{1}{2}\log_2{\det{(2\pi e \E \{\tilde{X}^2 \})}} \geq h(X|Y) = h(\tilde{X})
\end{equation}
with equality if and only if $X$ and $Y$ are jointly Gaussian. 
\end{props}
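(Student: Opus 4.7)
The plan is to derive the inequality by chaining three standard ingredients: translation invariance of differential entropy, conditioning reduces entropy, and the Gaussian maximum-entropy bound at prescribed variance. Concretely, I aim to show
\[
\tfrac{1}{2}\log_2\!\big(2\pi e\,\E\{\tilde{X}^2\}\big)\ \geq\ h(\tilde{X})\ \geq\ h(\tilde{X}\mid Y)\ =\ h(X\mid Y),
\]
and then read off when both inequalities are simultaneously tight.

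First I would note that $\E\tilde{X}=0$ by the tower property, so $\E\{\tilde{X}^2\}$ is the variance of $\tilde{X}$. For the rightmost identity $h(\tilde{X}\mid Y)=h(X\mid Y)$, the key observation is that $\E\{X\mid Y\}$ is a measurable function of $Y$, so the conditional density of $\tilde{X}$ given $Y{=}y$ is the conditional density of $X$ given $Y{=}y$ shifted by the deterministic quantity $\E\{X\mid Y{=}y\}$. Since differential entropy is invariant under constant shifts, the equality holds for each $y$ and, after averaging against the law of $Y$, as conditional entropies. The middle inequality $h(\tilde{X}\mid Y)\leq h(\tilde{X})$ is the standard conditioning-reduces-entropy fact, with equality iff $\tilde{X}$ and $Y$ are independent. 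The leftmost inequality $h(\tilde{X})\leq \tfrac{1}{2}\log_2(2\pi e\,\E\{\tilde{X}^2\})$ is the maximum-entropy property of the Gaussian among real random variables with a prescribed variance, with equality iff $\tilde{X}$ is Gaussian.

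For the equality case, both inner inequalities must be tight simultaneously, forcing $\tilde{X}$ to be Gaussian and independent of $Y$; this also makes $h(X\mid Y)=h(\tilde{X})$. The \emph{if} direction is then immediate once $(X,Y)$ is jointly Gaussian: by Proposition \ref{lse}, $\tilde{X}$ is uncorrelated with $Y$, and since $\tilde{X}$ is a linear combination of jointly Gaussian variables it is Gaussian, whereupon uncorrelatedness upgrades to independence. I expect the main obstacle to be the \emph{only if} direction: equality alone gives $\tilde{X}$ Gaussian and independent of $Y$, i.e., $X=g(Y)+\tilde{X}$ with $g=\E\{\cdot\mid Y\}$, which is strictly weaker than joint Gaussianity of $(X,Y)$ (any non-Gaussian $Y$ combined with an arbitrary $g$ yields the same equality). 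Closing this gap requires additional structural hypotheses, such as Gaussianity of $Y$ and affinity of $g$; I would therefore justify the ``only if'' by appealing to the ambient context of the paper, where $X$ and $Y$ arise as linear functionals of an underlying Gaussian process, so that the equality case automatically embeds in a jointly Gaussian pair.
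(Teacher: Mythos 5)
The paper offers no proof of this proposition at all---it simply cites El~Gamal and Kim, p.~21---so your derivation cannot be compared against an in-paper argument; it can only be judged on its own terms, and on those terms it is essentially the standard (and correct) argument behind the cited fact. Your chain
$\tfrac{1}{2}\log_2(2\pi e\,\E\{\tilde{X}^2\})\ge h(\tilde{X})\ge h(\tilde{X}\mid Y)=h(X\mid Y)$
is right: the last identity follows from translation invariance of differential entropy under the $Y$-measurable shift $\E\{X\mid Y\}$, the middle step is conditioning-reduces-entropy, and the first is the Gaussian maximum-entropy bound at fixed variance (using $\E\tilde{X}=0$). Two remarks. First, your chain exposes that the proposition as printed is slightly overstated: the displayed \emph{unconditional} identity $h(X\mid Y)=h(\tilde{X})$ is in general only the inequality $h(X\mid Y)\le h(\tilde{X})$, becoming an equality exactly when $\tilde{X}$ is independent of $Y$; you treat it that way implicitly, which is the right reading. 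Second, your diagnosis of the ``only if'' direction is correct and worth keeping: equality forces only that $\tilde{X}$ is Gaussian and independent of $Y$, and the example $X=g(Y)+\tilde{X}$ with non-Gaussian $Y$ and arbitrary measurable $g$ shows that joint Gaussianity of $(X,Y)$ is \emph{not} necessary. So the ``if and only if'' clause of the proposition is, strictly speaking, false as stated; the correct equality condition is that the conditional law of $X$ given $Y$ is Gaussian with $Y$-independent variance. In the paper's application only the ``if'' direction (tightness under the jointly Gaussian, linear-encoder choice) is ever invoked in the proof of Theorem~1, so your suggestion to close the gap by appealing to the ambient Gaussian structure is harmless, but it should be presented as a correction to the statement rather than as a completion of its proof.
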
  
\begin{proof}
Consult \cite{gamal:nit}, p. 21.
\end{proof}


\begin{figure}
	\center
  	\includegraphics[width = 1.2\columnwidth]{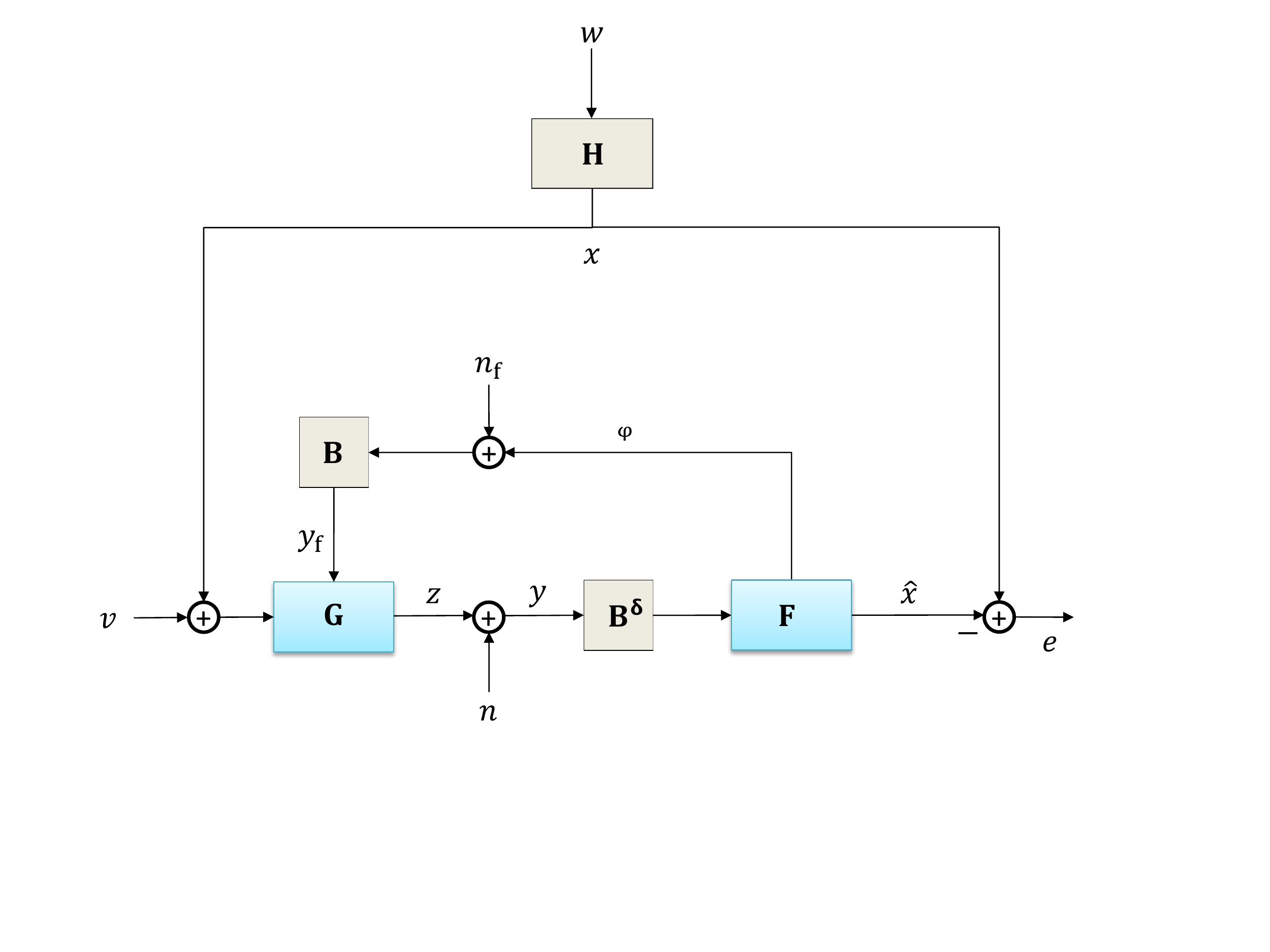}
	\caption{A simple model of an estimation problem of the state of the dynamical system $\H$ over a Gaussian communications channel with Gaussian noise $n\sim\mathcal{N}(0,N)$, Gaussian noise
	$n_\textup{f} \sim\mathcal{N}(0,N_\textup{f})$ for the feedback channel, and delay given by the backward shift operator 
	$\B$. The optimization parameters are given by the encoder $\G$ and the decoder $\F$. The samples of the encoder output $z$ are power limited with a peak power constraint given by $\E|z(t)|^2 \leq P$. }
	\label{noS}
\end{figure}

\begin{figure}
	\center
  	\includegraphics[width = 1.2\columnwidth]{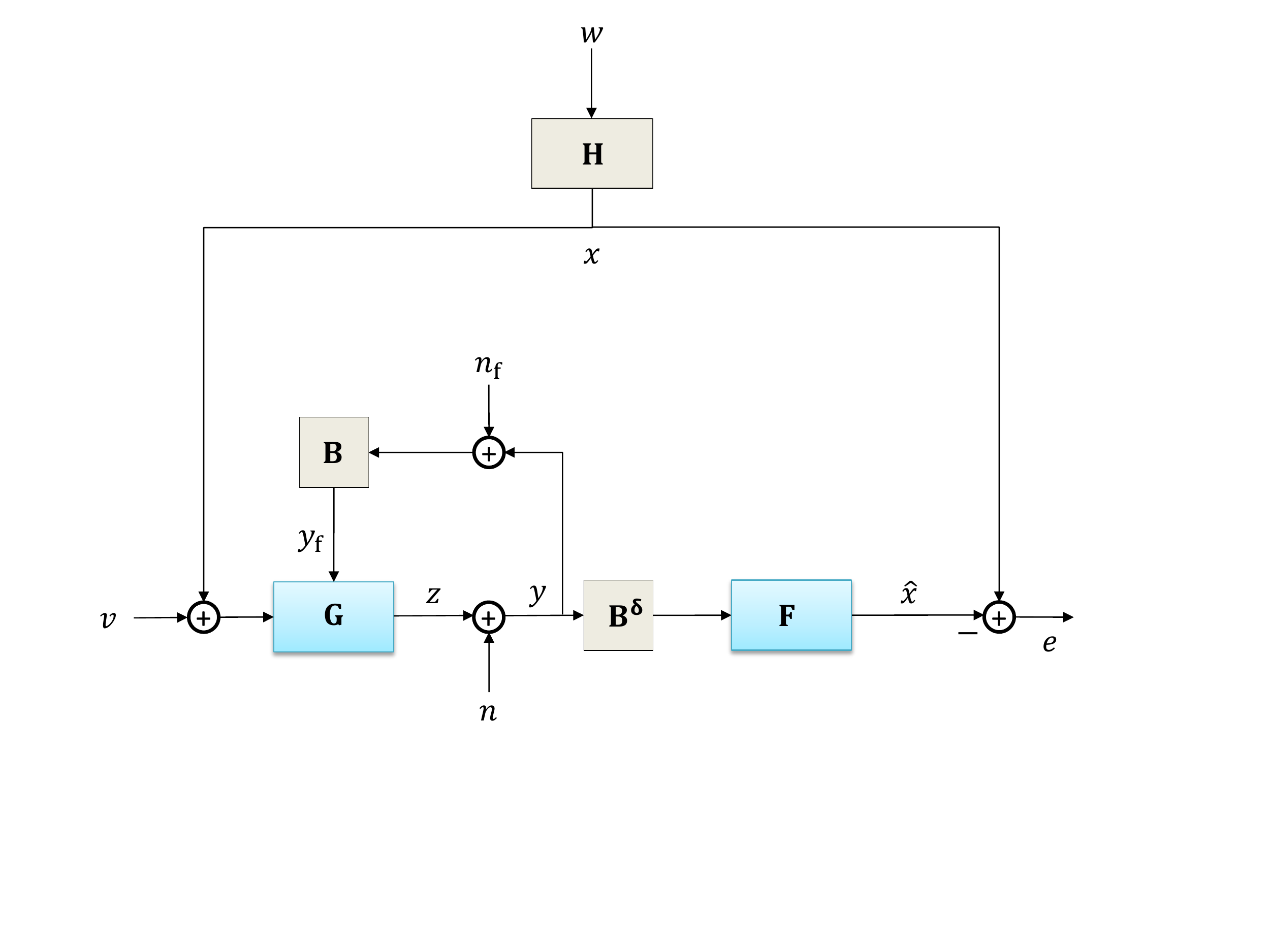}
	\caption{A simple model of an estimation problem of the state of the dynamical system $\H$ over a Gaussian communications channel with Gaussian noise $n\sim\mathcal{N}(0,N)$, Gaussian noise $n_\textup{f} \sim\mathcal{N}(0,N_\textup{f})$ for the feedback channel. We Here, we have feedback from the reciever side to the transmitter side in terms the reciever measurement $y(t)$.}
	\label{yfb}
\end{figure}

\begin{figure}
	\center
  	\includegraphics[width = 1.2\columnwidth]{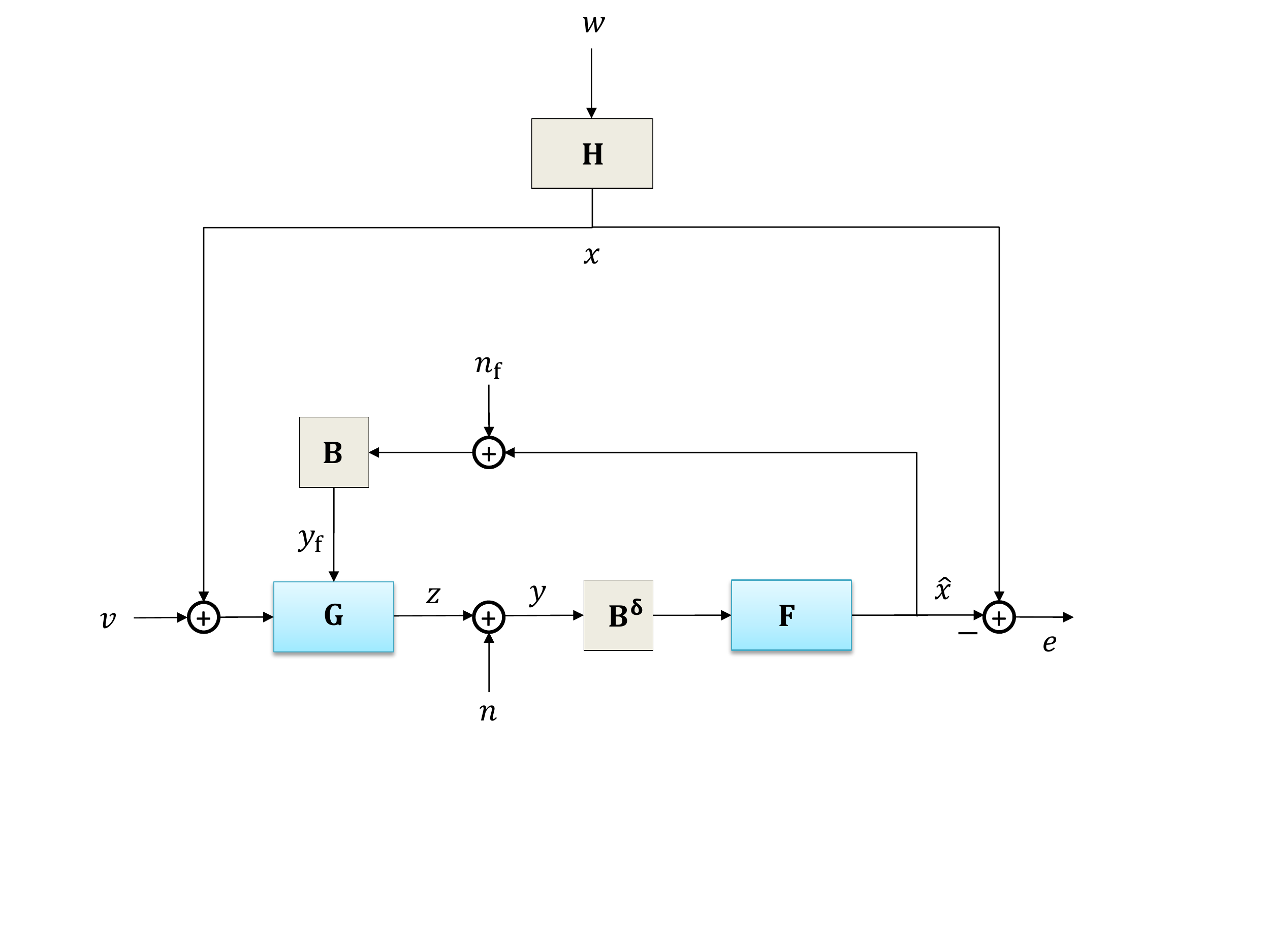}
	\caption{A simple model of an estimation problem of the state of the dynamical system $\H$ over a Gaussian communications channel with Gaussian noise $n\sim\mathcal{N}(0,N)$, Gaussian noise
	$n_\textup{f} \sim\mathcal{N}(0,N_\textup{f})$ for the feedback channel with feedback information given by the receiver's state estimates $\hat{x}(t)$.  }
	\label{xhatfb}
\end{figure}

\section{Problem Formulation}
We will consider the problem for the case $\textbf{S}=I$, as depicted in Figure \ref{noS}.

Let $\mathbf{H}$ be a first order linear time invariant dynamical system  with state-space realization

\begin{align} 
\label{linearsystem}
	x(t+1) 	&= ax(t) + bw(t) , ~~~~~~~ x(0)=x_0, ~~~0\leq t\leq T-1,
\end{align}
where $a,b \in\R$,  $\E x_0^2=V_{xx}(0)$,   
and $w$ is assumed to be white Gaussian noise with $w(t)\sim \N(0,1)$ for all $0\leq t\leq T-1$. 

The measurements at the decoder are given by $y(0) :=0$ and 
$$
y(t) = z(t)+n(t), ~~~~~ \text{for }t\geq 1,
$$
where $z$ is the transmitter signal and $n$ is a white Gaussian noise process with  $n(t)\sim \mathcal{N}(0, N)$. The decoder is a map given by $\F: y^{t-\delta}\mapsto \hat{x}(t)$. Without loss of generality, we will assume throughout that $\delta=1$ as the approach to the general case $\delta\ge 1$ is similar.

The transmitter receives the noisy feedback measurements 
$$
y_\textup{f}(t) = \phi(t)+n_\textup{f}(t), ~~~~~ \text{for }t\geq 1,
$$
where $n_\textup{f}$ is a white Gaussian noise process with  $n_\textup{f}(t)\sim \mathcal{N}(0, N_\textup{f})$. The encoder is a map given by $\G: (x^t, y_\textup{f}^{t-1}) \mapsto z(t)$.  We also have a per symbol power constraint on the transmitted signal $z(t)$ given by $\E |z(t)|^2 \leq P$.

The objective is to design causal precoder and decoder maps $\G: (x^t, y_\textup{f}^{t-1}) \mapsto z(t)$ and $\F: y^{t-1}\mapsto \hat{x}(t)$, respectively,  such that the average of the mean squared error 
$$
\frac{1}{T}\sum_{t=1}^T \E|x(t)-\hat{x}(t)|^2
$$
is minimized.
The precoder and decoder maps can be equivalently written as a causal dynamical system according to
  \begin{equation}
 \label{nonlin}
 \begin{aligned}
 z(t) 		&= g_t(x^t, z^{t-1}, y_\textup{f}^{t-1})\\
 y(t)		&= z(t)+n(t)\\
 y_\textup{f}(t)	&= \phi(t)+n_\textup{f}(t)\\
 \hat{x}(t) 	&= f_t(y^{t-1})
 \end{aligned}
 \end{equation}
where $g_t$ is the precoder and $f_t$ is the decoder.

\begin{problem}
\label{p1}
Consider the linear system
\begin{align*} 
	x(t+1) 	&= a(t)x(t) + b(t)w(t),
\end{align*}
$x(0)=x_0, ~0\leq t\leq T-1$, where $a(t),b(t) \in\R$, $\E x_0^2=V_{xx}(0)$, 
and $w$ is white Gaussian noise with $w(t)\sim \N(0,1)$, $0\leq t\leq T-1$.
Let $n$ and $n_\textup{f}$ be white Gaussian noise processes independent of each other and of $w$, with $n(t)\sim \mathcal{N}(0, N)$ and $n_\textup{f}(t)\sim \mathcal{N}(0, N_\textup{f})$. Find an optimal precoder and decoder pair (\ref{nonlin}) such that
$$
\frac{1}{T}\sum_{t=1}^T \E|x(t)-\hat{x}(t)|^2
$$
is minimized, where $y(0) = 0$.
 \end{problem}
 
Note that we haven't given the form of the function $\phi$. We will consider two cases of interest here, the first one being $\phi(t) = y(t)$, and the second one $\phi(t)=\hat{x}(t)$, as depicted in figures \ref{yfb} and \ref{xhatfb}, respectively.


 \section{Main Results}

\subsection{The Finite-Horizon Filtering problem with Receiver-Output Feedback}
The first result of this paper presents the structure of the optimal precoder and
decoder for the case where a noisy version of the receiver-output, $y(t)$, is available at the transmitter.

 \begin{theorem}
\label{mainy} 
Consider Problem \ref{p1} with $a(t) = a$, $b(t) = b$, and $\phi(t)=y(t)$. 
The optimal communication scheme is given by
  \begin{equation}
 \label{nofbeq}
 \begin{aligned}
	 \hat{x}(t) 	&= \E\{x(t) | y^{t-1}\}\\
 	\tilde{x}(t) 	&= x(t)-\hat{x}(t)\\
	\check{x}(t) &=  \E\{\tilde{x}(t) | x^{t}, y_\textup{f}^{t-1}\}\\
 		z(t) 		&= \frac{\sqrt{P}}{\sigma_t} \check{x}(t),\\
 \end{aligned}
 \end{equation}
 where $\sigma_t^2=  \E|\check{x}(t)|^2$, for $t= 1, ..., T$.
\end{theorem}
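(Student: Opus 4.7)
The proof naturally splits into two parts: (i) identifying the optimal decoder for any fixed encoder, and (ii) identifying the optimal encoder given the decoder.

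For part (i), note that the decoder is any causal map $\F\colon y^{t-1} \mapsto \hat{x}(t)$, and the cost decouples across time as $\sum_t \E|x(t) - \hat{x}(t)|^2$. By Proposition \ref{lse} applied pointwise in $t$ with $X = x(t)$ and $Y = y^{t-1}$, the unique minimizer is $\hat{x}(t) = \E\{x(t)\mid y^{t-1}\}$, independent of the choice of encoder. This establishes the first line of \eqref{nofbeq}.

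For part (ii), fix this decoder and observe that the MSE at time $t$ equals $\E|\tilde x(t)|^2$ where $\tilde x(t) = x(t) - \hat{x}(t)$. The encoder's output $z(t)$ must be a function of $(x^t, y_\textup{f}^{t-1})$ with $\E|z(t)|^2 \leq P$. The plan is to argue that only the projection of $\tilde x(t)$ onto the transmitter's information can help the decoder, and that a scalar multiple of this projection is optimal. Concretely, decompose $\tilde x(t) = \check x(t) + \bigl(\tilde x(t) - \check x(t)\bigr)$, where $\check x(t) = \E\{\tilde x(t)\mid x^t, y_\textup{f}^{t-1}\}$. By Proposition \ref{lse}, the residual $\tilde x(t) - \check x(t)$ is uncorrelated with every function of $(x^t, y_\textup{f}^{t-1})$; in particular it is uncorrelated with every admissible $z(t)$ and with $n(t)$, hence uncorrelated with $y(t)$. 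So the transmitter can only deliver information about $\check x(t)$.

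For the information-theoretic optimality of the linear scaling, I would combine Propositions \ref{logerror}, \ref{dataproc}, and \ref{entpineq}. The lower bound $\E|\tilde x(t+1)|^2 \geq (2\pi e)^{-1} 2^{2 h(x(t+1)\mid y^t)}$ converts MSE into conditional entropy, and the decomposition
\[
h(x(t+1)\mid y^t) = h(x(t+1)\mid y^{t-1}) - I\bigl(x(t+1);\, y(t)\mid y^{t-1}\bigr)
\]
together with the data-processing inequality yields $I(x(t+1); y(t)\mid y^{t-1}) \leq I(z(t); y(t)\mid y^{t-1})$, which is at most $\tfrac12\log_2(1+P/N)$ under the peak power constraint. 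The proposed encoder achieves all these inequalities with equality: $\check x(t)$ is a linear (hence Gaussian) functional of the jointly Gaussian variables, so $z(t) = (\sqrt{P}/\sigma_t)\check x(t)$ is Gaussian with variance exactly $P$; by construction $z(t)$ is orthogonal to $y^{t-1}$, so the channel is used most efficiently; and all conditional distributions remain Gaussian, making the entropy lower bound of Proposition \ref{logerror} tight. Induction on $t$ propagates the joint Gaussianity, closing the argument.

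The main obstacle will be the converse: ruling out nonlinear encoders $g_t$ with $\E|g_t(x^t,y_\textup{f}^{t-1})|^2 \leq P$ that might achieve strictly more information than the linear scaling. The resolution is the entropy power inequality of Proposition \ref{entpineq}, which provides a Gaussian-extremal bound on $h(y(t)\mid y^{t-1})$ given the power constraint; together with Proposition \ref{logerror} this pins down that any improvement over the linear scheme would require strict inequality in the EPI, which occurs only when $z(t)$ is non-Gaussian, and that in turn strictly increases the MSE bound through the Gaussian extremality in Proposition \ref{logerror}. Combining the achievability and the converse yields the claimed optimal structure.
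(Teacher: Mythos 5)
Your high-level architecture matches the paper's: the decoder is the conditional mean by Proposition \ref{lse}, joint Gaussianity is propagated by induction on the linearity of past filters, and the converse rests on Propositions \ref{logerror}, \ref{dataproc} and \ref{entpineq}. But the converse chain you assemble does not close, and the reason is precisely the feature that distinguishes noisy feedback from noiseless feedback. You apply the data-processing/capacity bound to $I(x(t+1);y(t)\mid y^{t-1})$, which yields a lower bound of the form $\E|\tilde{x}(t|t)|^2 \geq \tfrac{N}{P+N}\,\E|\tilde{x}(t)|^2$. That is the noiseless-feedback benchmark: it implicitly assumes the entire error $\tilde{x}(t)$ can be pushed through the channel. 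When $N_\textup{f}>0$ the transmitter only observes $\check{x}(t)=\E\{\tilde{x}(t)\mid x^t,y_\textup{f}^{t-1}\}$; the residual $\bar{x}(t)=\tilde{x}(t)-\check{x}(t)$ is driven by the unresolved channel noise $\tilde{n}(k)=n(k)-\E\{n(k)\mid n(k)+n_\textup{f}(k)\}$ and is nonzero for $t\ge 2$. The scheme in the theorem achieves $\E|\tilde{x}(t|t)|^2=\tfrac{N}{P+N}\E|\check{x}(t)|^2+\E|\bar{x}(t)|^2$, which strictly exceeds your bound $\tfrac{N}{P+N}\left(\E|\check{x}(t)|^2+\E|\bar{x}(t)|^2\right)$ whenever $\E|\bar{x}(t)|^2>0$. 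So your claim that "the proposed encoder achieves all these inequalities with equality" is false, and achievability does not meet your converse. The missing idea is the paper's: apply the capacity/data-processing argument to $\check{x}(t)$ alone, via the Markov chain $\check{x}(t)\rightarrow z(t)\rightarrow y(t)$, and use the entropy power inequality to split $2^{2h(\check{x}(t)+\bar{x}(t)\mid y(t))}\geq 2^{2h(\check{x}(t)\mid y(t))}+2^{2h(\bar{x}(t))}$, so that the capacity discount $\tfrac{N}{P+N}$ is charged only to the component the transmitter can actually see (steps (\ref{eq3})--(\ref{eq6}) of the paper's proof). Your proposed use of the EPI --- to bound $h(y(t)\mid y^{t-1})$ --- is not where it is needed; that step only requires Gaussian maximum entropy.

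A second, related error: you assert that $z(t)$ is orthogonal to $y^{t-1}$ "by construction." It is not when $N_\textup{f}>0$: from the realization in Theorem \ref{mainnofb2} one has $\bar{x}(t+1)=a\bar{x}(t)-K(t)\tilde{n}(t)$, and since $y(t)$ contains $n(t)=\hat{n}(t)+\tilde{n}(t)$, a direct computation gives $\E\{\check{x}(t+1)y(t)\}=-\E\{\bar{x}(t+1)y(t)\}=K(t)NN_\textup{f}/(N+N_\textup{f})\neq 0$. The innovation the transmitter forms is orthogonal to its own information set $(x^t,y_\textup{f}^{t-1})$, not to the receiver's $y^{t-1}$; this is exactly why the problem does not reduce to the noiseless-feedback case and why the extra EPI splitting of $\tilde{x}(t)$ into $\check{x}(t)$ and $\bar{x}(t)$ is indispensable.
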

 \begin{proof}
See the Appendix.
\end{proof}
 
 \begin{theorem}
 \label{mainnofb2}
Consider Problem \ref{p1} with $a(t) = a$, $b(t) = b$, and $\phi(t)=y(t)$. 
The state space realization of the optimal communication scheme 
 is given by

\begin{equation}
 \label{nofbsseq}
 \begin{aligned}
	\hat{x}(t+1) 	&= a\hat{x}(t) +  K(t) y(t)\\
	s(t+1) 			&= a s(t) + K(t) (z(t) +  \hat{n}(t)) \\
	\hat{n}(t) 		&= \frac{N}{N + N_\textup{f}} (y_\textup{f}(t) - z(t))\\
	\check{x}(t) 	&=  x(t) - s(t)\\
 	z(t) 			&= \frac{\sqrt{P}}{\sigma_t} \check{x}(t),\\
 \end{aligned}
 \end{equation}
 where $s(0) = 0$, $V_{ss}(0) = V_{sx}(0) = 0$,

\begin{eqnarray}
\label{Kthm}
	&K(t) &= a \sigma_t\sqrt{P}(P+N)^{-1}\label{K} \\
	&\sigma_t^2 &= V_{xx}(t) -2V_{sx}(t)+ V_{ss}(t) \label{sigma}
\end{eqnarray}

\begin{equation}
\label{cov}
\begin{aligned}
&
\begin{bmatrix}
	V_{ss}(t+1) & V_{sx}(t+1)\\
	V_{xs}(t+1) & V_{xx}(t+1)	
\end{bmatrix} =  
\begin{bmatrix}
	\frac{K^2(t)N^2}{N+N_\textup{f}} & 0\\
	0 & b^2	
\end{bmatrix} 
 + \\
&\begin{bmatrix}
	\frac{aN}{P+N} 	& \frac{aP}{P+N}\\
	0		& a	
\end{bmatrix} 
\begin{bmatrix}
	V_{ss}(t) & V_{sx}(t)\\
	V_{xs}(t) & V_{xx}(t)	
\end{bmatrix}
\begin{bmatrix}
	\frac{aN}{P+N} 	& \frac{aP}{P+N}\\
	0		& a	
\end{bmatrix}^\intercal \\
\end{aligned}
\end{equation}

\end{theorem}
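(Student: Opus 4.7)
The plan is to show that the proposed state-space recursions implement the optimal scheme characterized in Theorem \ref{mainy}. I would first introduce the auxiliary encoder-side state $s(t) := \E\{\hat{x}(t) | x^t, y_\textup{f}^{t-1}\}$, the transmitter's MMSE estimate of the decoder's estimate $\hat{x}(t)$. Since $x(t)$ is already in the conditioning $\sigma$-algebra, linearity of conditional expectation immediately gives $\check{x}(t) = \E\{x(t) - \hat{x}(t) | x^t, y_\textup{f}^{t-1}\} = x(t) - s(t)$, matching the identity that appears in the realization.

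The bulk of the argument would then be a joint induction on $t$, propagating the three statements (i) $\hat{x}(t) = \E\{x(t) | y^{t-1}\}$, (ii) $s(t) = \E\{\hat{x}(t) | x^t, y_\textup{f}^{t-1}\}$, and (iii) the innovation property $\check{x}(t) \perp y^{t-1}$, which together with $n(t) \perp y^{t-1}$ yields $y(t) \perp y^{t-1}$. The initial conditions at $t=0$ follow directly from $\hat{x}(0) = s(0) = 0$, $V_{ss}(0) = V_{sx}(0) = 0$.

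Under the hypothesis, the decoder step follows from a standard Kalman measurement update: since $w(t) \perp y^t$ one has $\hat{x}(t+1) = a\E\{x(t) | y^t\}$, and orthogonality $y(t)\perp y^{t-1}$ collapses the update to $\E\{x(t) | y^t\} = \hat{x}(t) + L(t)\,y(t)$. The gain comes from $\mathrm{Cov}(\tilde{x}(t), y(t)) = (\sqrt{P}/\sigma_t)\,\E\{\tilde{x}(t)\check{x}(t)\} = \sqrt{P}\,\sigma_t$ (using the projection identity $\E\{\tilde{x}(t)\check{x}(t)\} = \E\{\check{x}(t)^2\} = \sigma_t^2$) and $\mathrm{Var}(y(t)) = P + N$, giving $L(t) = \sqrt{P}\sigma_t/(P+N)$ and $K(t) = aL(t)$. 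The encoder step then follows by applying $\E\{\cdot | x^{t+1}, y_\textup{f}^t\}$ to the decoder update: the increments $x(t+1) - ax(t) = bw(t)$ and $y_\textup{f}(t) - z(t) = n(t) + n_\textup{f}(t)$ both bring only noise independent of $\hat{x}(t)$, so $\E\{\hat{x}(t) | x^{t+1}, y_\textup{f}^t\} = s(t)$, while the jointly-Gaussian conditional-mean formula gives $\E\{n(t) | n(t) + n_\textup{f}(t)\} = (N/(N+N_\textup{f}))(y_\textup{f}(t) - z(t)) = \hat{n}(t)$, yielding $s(t+1) = as(t) + K(t)(z(t) + \hat{n}(t))$.

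For the covariance recursion, substituting $z(t) = (\sqrt{P}/\sigma_t)(x(t) - s(t))$ and $\hat{n}(t) = (N/(N+N_\textup{f}))(n(t) + n_\textup{f}(t))$ into the $s$-update and combining with $x(t+1) = ax(t) + bw(t)$ puts the joint dynamics of $(s(t), x(t))$ in matrix form with state matrix $\begin{pmatrix} aN/(P+N) & aP/(P+N) \\ 0 & a \end{pmatrix}$ and independent driving-noise variances $K(t)^2 N^2/(N+N_\textup{f})$ and $b^2$ respectively; propagating the covariance reproduces equation (\ref{cov}), and the identity $\sigma_t^2 = V_{xx}(t) - 2V_{sx}(t) + V_{ss}(t)$ is simply $\E\{(x(t) - s(t))^2\}$. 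The main obstacle is step (iii) of the induction: the orthogonality $\check{x}(t) \perp y^{t-1}$ is not automatic from $\check{x}(t) = \E\{\tilde{x}(t) | x^t, y_\textup{f}^{t-1}\}$ because $\sigma(y^{t-1})$ is neither contained in nor orthogonal to $\sigma(x^t, y_\textup{f}^{t-1})$ --- the feedback noise $n_\textup{f}$ couples the encoder and decoder information sets across time --- so this step has to be handled carefully by exploiting the closed-loop Gaussian structure of the scheme and the matched linear driving of $\hat{x}$ and $s$ by the common channel noise.
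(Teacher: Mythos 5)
Your proposal retraces the paper's own appendix argument almost step for step: the paper likewise introduces $s(t)=\E\{\hat{x}(t)\,|\,x^t,y_\textup{f}^{t-1}\}$, obtains $\check{x}(t)=x(t)-s(t)$, computes the gain $\kappa(t)=\sigma_t\sqrt{P}(P+N)^{-1}$ from $\mathbf{cov}\{\check{x}(t),y(t)\}$ and $\E y^2(t)=P+N$, derives the $s$-recursion by projecting the $\hat{x}$-recursion onto the encoder's information set (using $\E\{n(t)\,|\,n(t)+n_\textup{f}(t)\}=\hat{n}(t)$), and propagates the joint covariance of $(s,x)$ exactly as you do. Those portions of your argument are correct and essentially identical to the paper's.

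The genuine gap is the one you name yourself and then leave open: item (iii) of your induction, the orthogonality $\check{x}(t)\perp y^{t-1}$ (equivalently $y(t)\perp y^{t-1}$). Everything rests on it --- without it $\hat{x}(t+1)=a\hat{x}(t)+K(t)y(t)$ is not the conditional mean $\E\{x(t+1)\,|\,y^t\}$, and your gain computation, which regresses on $y(t)$ alone rather than on the innovation $y(t)-\E\{y(t)\,|\,y^{t-1}\}$, does not apply. Saying the step "has to be handled carefully" is not a proof, and in fact the property fails as stated when $N_\textup{f}>0$: within the proposed scheme one has $\bar{x}(t)=-\sum_{k<t}a^{t-1-k}K(k)\tilde{n}(k)$ with $\tilde{n}(k)=n(k)-\hat{n}(k)$, and since $\E\{\tilde{n}(k)n(k)\}=NN_\textup{f}/(N+N_\textup{f})\neq 0$ while $\tilde{n}(k)$ is uncorrelated with $z(j)$ for every $j$, one gets $\E\{\check{x}(t)y(j)\}=-\E\{\bar{x}(t)y(j)\}=a^{t-1-j}K(j)NN_\textup{f}/(N+N_\textup{f})\neq 0$ for $j<t$. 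So the channel outputs are not mutually orthogonal and the innovation property you are inducting on is false for noisy feedback, not merely delicate. (The paper's own proof silently assumes the equivalent identity $\E\{\tilde{x}(t)\,|\,y^t\}=\E\{\tilde{x}(t)\,|\,y(t)\}$, so you have located precisely the weak joint of the published argument --- but identifying the hole does not close it; closing it would require either re-deriving the decoder in terms of the true innovations of $y$ or a different justification altogether.)
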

\begin{proof}
See the Appendix.
\end{proof}
 

\subsection{Time-Varying Systems}

The results considered so far treated the case where the state stems from a linear time invariant system. It's straight forward to verify that the results hold when we replace the parameters static $a, b, P, N, N_\textup{f}$ with time varying parameters $a(t), b(t), P(t), N(t), N_\textup{f}(t)$.


\subsection{Separation Principle for Optimal Communication}
Consider the linear system
\begin{align*} 
	x(t+1) 		&= ax(t) + bw(t) \\
	\gamma(t)	&= cx(k) + dv(t)
\end{align*}
for $0\leq t\leq T-1$, with $x(0)=x_0$, $\E x_0^2=V_{xx}(0)$, and $(w(t),v(t))$ is white Gaussian noise process with a given covariance. We assume now that the transmitter does't have
access to the state $x(t)$ but $\gamma(t)$ instead. We get the following problem.

\begin{problem}
\label{p2}
Consider the linear system
\begin{align*} 
	x(t+1) 	&= ax(t) + bw(t) \\
	\gamma(t)	&= cx(k) + dv(t)
\end{align*}
$0\leq t\leq T-$1, where $a,b \in\R$, $x(0)=x_0$,  $\E x_0^2=V_{xx}(0)$, and
$$
\E 
\begin{bmatrix}
	w(t)\\
	v(t)	
\end{bmatrix}
\begin{bmatrix}
	w(t)\\
	v(t)	
\end{bmatrix}^\intercal =
\begin{bmatrix}
	V_{ww}(t) & V_{wv}(t)\\
	V_{vw}(t) & V_{vv}(t)	
\end{bmatrix}
$$
is given for $0\leq t\leq T$.
Let $n$ and $n_\textup{f}$ be white Gaussian noise processes independent each other and of $w$, with $n(t)\sim \mathcal{N}(0, N)$ and $n_\textup{f}(t)\sim \mathcal{N}(0, N_\textup{f})$. Find an optimal precoder and decoder pair 
\begin{equation}
 \label{nonlin2}
 \begin{aligned}
 z(t) 		&= g_t(\gamma^t, z^{t-1}, y_\textup{f}^{t-1})\\
 y(t)		&= z(t)+n(t)\\
 y_\textup{f}(t)	&= \phi(t)+n_\textup{f}(t)\\
 \hat{x}(t) 	&= f_t(y^{t-1})
 \end{aligned}
 \end{equation}
 such that
$$
\frac{1}{T}\sum_{t=1}^T \E|x(t)-\hat{x}(t)|^2
$$
is minimized, where $y(0) = 0$.
 \end{problem}

The optimal transmission scheme is for the transmitter to find the best estimate of $x(t)$ based on $\gamma^t$, namely $\breve{x}(t) = \E\{x(t) | \gamma^t\}$, and then use this estimate as the state to be transmitted using the optimal communication scheme for the case of full state measurement at the transmitter given by (\ref{nofbsseq}). 

 \begin{theorem}
 \label{mainfb2}
The state space realization of the optimal communication scheme solution of 
Problem \ref{p2} with $\phi(t)=y(t)$ is given by

\begin{equation}
 \label{nmfbsseq}
 \begin{aligned}
	\hat{x}(t+1) 	&= a\hat{x}(t) +  K(t) y(t)\\
	s(t+1) 			&= a s(t) + K(t) (z(t) +  \hat{n}(t)) \\
	\hat{n}(t) 		&= \frac{N}{N + N_\textup{f}} (y_\textup{f}(t) - z(t))\\
	\check{x}(t) 	&=  x(t) - s(t)\\
 	z(t) 			&= \frac{\sqrt{P}}{\sigma_t} \check{x}(t),\\
 \end{aligned}
 \end{equation}
 where $s(0) = 0$, $V_{ss}(0) = V_{sx}(0)$, $V_{\xi \xi}(0) = V_{xx}(0)$,

\begin{equation*} 
\begin{aligned}
L(t)					&= V_{\xi \xi}(t)c(c^2V_{\xi \xi}(t) + d^2V_{vv}(t) )^{-1} \\ 
V_{\xi \xi}(t+1) 	&=  (a-aL(t)c)^2 V_{\xi \xi}(t) \\
					&+
\begin{bmatrix}
	b & -aL(t)
\end{bmatrix}
\begin{bmatrix}
	V_{ww}(t) & V_{wv}(t)\\
	V_{vw}(t) & V_{vv}(t)	
\end{bmatrix}
\begin{bmatrix}
	b & -aL(t)
\end{bmatrix}^\intercal
\end{aligned}
\end{equation*}

\begin{equation*} 
	\begin{aligned}
	\beta^2(t) &= L^2(t+1)  (c^2 V_{\xi \xi}(t+1)  + d^2 V_{vv}(t+1) )
	\end{aligned}
\end{equation*}

\begin{eqnarray}
\label{Kthm2}
	&K(t) &= a \sigma_t\sqrt{P}(P+N)^{-1}\label{K2} \\
	&\sigma_t^2 &= V_{xx}(t) -2V_{sx}(t)+ V_{ss}(t) \label{sigma2}
\end{eqnarray}

\begin{equation}
\label{cov2}
\begin{aligned}
&
\begin{bmatrix}
	V_{ss}(t+1) & V_{sx}(t+1)\\
	V_{xs}(t+1) & V_{xx}(t+1)	
\end{bmatrix} =  
\begin{bmatrix}
	\frac{K^2(t)N^2}{N+N_\textup{f}} & 0\\
	0 & \beta^2(t)	
\end{bmatrix} + \\
&\begin{bmatrix}
	\frac{aN}{P+N} 	& \frac{aP}{P+N}\\
	0		& a	
\end{bmatrix} 
\begin{bmatrix}
	V_{ss}(t) & V_{sx}(t)\\
	V_{xs}(t) & V_{xx}(t)	
\end{bmatrix}
\begin{bmatrix}
	\frac{aN}{P+N} 	& \frac{aP}{P+N}\\
	0		& a	
\end{bmatrix}^\intercal \\
\end{aligned}
\end{equation} 
\end{theorem}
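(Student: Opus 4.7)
The plan is to establish a separation principle that reduces Problem \ref{p2} to the full-state setting of Theorem \ref{mainnofb2}. The key observation is that the transmitter's optimal action should factor into two stages: first, form the minimum-MSE estimate $\breve{x}(t) := \E\{x(t) \mid \gamma^t\}$ from its noisy observations, then run the optimal full-state encoder of Theorem \ref{mainnofb2} with $\breve{x}$ in the role of the state. The residual $\xi(t) := x(t) - \breve{x}(t)$ is independent of every signal that can be carried by the channel or the feedback loop, so it contributes an irreducible additive constant to the cost.

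First I would prove the cost decomposition. Unrolling the closed-loop recursion \eqref{nonlin2}, the decoder's estimate $\hat{x}(t)$ is a deterministic function of $(\gamma^{t-1}, n^{t-1}, n_{\textup{f}}^{t-2})$. Since $(x,\gamma,w,v)$ are jointly Gaussian and $(n, n_{\textup{f}})$ are independent of them, Proposition \ref{lse} yields $\E\{\xi(t) \mid \gamma^t, n^{t-1}, n_{\textup{f}}^{t-2}\} = \E\{\xi(t) \mid \gamma^t\} = 0$, so $\breve{x}(t)$ and $\hat{x}(t)$ are both uncorrelated with $\xi(t)$. Consequently,
\begin{equation*}
\E|x(t) - \hat{x}(t)|^2 = V_{\xi\xi}(t) + \E|\breve{x}(t) - \hat{x}(t)|^2,
\end{equation*}
where $V_{\xi\xi}(t)$ is the Kalman prediction-error variance and depends on neither $g$ nor $f$. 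Minimizing the total MSE in Problem \ref{p2} is therefore equivalent to optimally communicating the surrogate process $\breve{x}$.

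Next I would extract the dynamics of $\breve{x}$. Standard time-varying Kalman filter recursions give
\begin{equation*}
\breve{x}(t+1) = a\breve{x}(t) + \tilde{w}(t), \qquad \tilde{w}(t) := L(t+1)\,\nu(t+1),
\end{equation*}
where $\nu(\cdot)$ is the white innovations process and $L(\cdot)$ the Kalman gain. A direct computation shows that $\tilde{w}(t)$ is zero-mean, white, Gaussian, independent of the channel noises, and has variance precisely $\beta^2(t)$, while the stated Riccati recursion for $V_{\xi\xi}$ follows by propagating $\xi$ through one time step and accounting for the cross-covariance $V_{wv}$, which produces the bracketed noise-injection matrix. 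The surrogate system $(\breve{x},\tilde{w})$ thus matches the template of Problem \ref{p1} with the pair $(a, b^2)$ replaced by the time-varying pair $(a, \beta^2(t))$, the transmitter now having \textit{perfect} access to the surrogate state.

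Finally, I would invoke Theorem \ref{mainnofb2} in its time-varying form on this surrogate problem. Replacing $b^2$ with $\beta^2(t)$ in the covariance recursion \eqref{cov} reproduces \eqref{cov2} exactly, while the encoder and decoder state-space equations \eqref{nofbsseq} carry over verbatim to yield \eqref{nmfbsseq}. The main technical obstacle will be to propagate the independence structure used in the cost decomposition rigorously through the noisy feedback loop --- specifically, to establish by induction on $t$ that $\xi(t)$ is independent of the entire closed-loop history $(z^{t-1}, y^{t-1}, y_{\textup{f}}^{t-1})$ rather than merely uncorrelated with $\hat{x}(t)$ --- and to correctly bookkeep the correlation $V_{wv}(t)$ throughout the Kalman recursion.
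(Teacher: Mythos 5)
Your proposal is correct and follows essentially the same route as the paper's proof: orthogonally decompose the cost into the irreducible Kalman estimation error $V_{\xi\xi}(t)$ plus the error in communicating the surrogate state $\breve{x}(t)$, observe that $\breve{x}$ evolves as $a\breve{x}(t)$ driven by the white innovations with variance $\beta^2(t)$, and then invoke Theorem \ref{mainnofb2} in its time-varying form with $b(t)=\beta(t)$. The paper settles for uncorrelatedness of $\xi(t)$ with $y^{t-1}$ (which suffices in the jointly Gaussian setting) where you propose a fuller independence induction, but this is a difference of emphasis, not of approach.
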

\begin{proof}
The proof is deferred to the appendix.
\end{proof}


\subsection{No Feedback}
A special case is when no feedback is available from the receiver to the transmitter. This is equivalent to letting $N_\textup{f} \rightarrow \infty$, or setting $y_\textup{f} = 0$, as depicted in Figure \ref{noFB}. This will simply imply that $\hat{n} = 0$ and $\tilde{n} = n$, and thus, we obtain the optimal communication scheme that was previously obtained in \cite{gattami:ifac:2014}. The case of no feedback is very delicate, since it does \textit{not} possess the property of communicating information that is orthogonal to the information available at the receiver. 

\begin{cor}
 \label{mainnofb3}
The state space realization of the optimal communication scheme solution of 
Problem \ref{p1} with $y_\textup{f} = 0$ is given by

\begin{equation}
 \label{nofbsseq}
 \begin{aligned}
	\hat{x}(t+1) 	&= a\hat{x}(t) +  K(t) y(t)\\
	s(t+1) 			&= as(t) + K(t) z(t)\\
	\check{x}(t) 	&=  x(t) - s(t)\\
 	z(t) 			&= \frac{\sqrt{P}}{\sigma_t} \check{x}(t),\\
 \end{aligned}
 \end{equation}
 where $s(0) = 0$, $V_{ss}(0) = V_{sx}(0) = 0$,

\begin{eqnarray}
	&K(t) &= a \sigma_t\sqrt{P}(P+N)^{-1}\label{K} \\
	&\sigma_t^2 &= V_{xx}(t) -2V_{sx}(t)+ V_{ss}(t) \label{sigma}
\end{eqnarray}

{\small
\begin{equation}
\label{cov}
\begin{aligned}
&
\begin{bmatrix}
	V_{ss}(t+1) & V_{sx}(t+1)\\
	V_{xs}(t+1) & V_{xx}(t+1)	
\end{bmatrix} =\\
&
\begin{bmatrix}
	\frac{aN}{P+N} 	& \frac{aP}{P+N}\\
	0		& a	
\end{bmatrix} 
\begin{bmatrix}
	V_{ss}(t) & V_{sx}(t)\\
	V_{xs}(t) & V_{xx}(t)	
\end{bmatrix}
\begin{bmatrix}
	\frac{aN}{P+N} 	& \frac{aP}{P+N}\\
	0		& a	
\end{bmatrix}^\intercal \\
\end{aligned}
\end{equation} 
} 
\end{cor}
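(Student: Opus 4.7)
My plan is to derive Corollary~\ref{mainnofb3} as a direct specialization of Theorem~\ref{mainnofb2} by driving the feedback-noise variance $N_\textup{f}$ to infinity, which formalizes the statement that the transmitter receives no useful signal from the receiver. First I would substitute this limit into each line of the state-space realization in Theorem~\ref{mainnofb2}: the feedback-denoising gain $\tfrac{N}{N+N_\textup{f}}$ appearing in $\hat n(t) = \tfrac{N}{N+N_\textup{f}}\bigl(y_\textup{f}(t)-z(t)\bigr)$ tends to zero, so $\hat n(t)\to 0$ in mean square and the transmitter recursion for $s(t)$ collapses to $s(t+1) = a s(t) + K(t) z(t)$ as claimed. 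The expressions for $K(t)$, $\sigma_t^2$, $\hat x(t+1)$, $\check x(t)$, and $z(t)$ contain no explicit $N_\textup{f}$ and therefore carry over verbatim, as do the initial conditions $s(0)=0$ and $V_{ss}(0)=V_{sx}(0)=0$.

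Next I would verify that the covariance recursion of Theorem~\ref{mainnofb2} specializes to the recursion stated in the corollary. The only $N_\textup{f}$-dependent contribution is the inhomogeneous upper-left term $\tfrac{K^2(t)N^2}{N+N_\textup{f}}$, which vanishes in the limit; the $b^2$ process-noise term inherited from the state dynamics is retained in the lower-right entry (I would check whether its apparent absence in the corollary's display is a typographical compression rather than a substantive change). What remains is the homogeneous matrix propagation through the same transition matrix $\bigl[\begin{smallmatrix} aN/(P+N) & aP/(P+N)\\ 0 & a \end{smallmatrix}\bigr]$, matching the corollary.

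The main obstacle will be justifying that the $N_\textup{f}\to\infty$ limit of the optimal scheme for noisy feedback is actually the optimal scheme for the no-feedback problem, rather than merely a feasible one. To close this gap I would either (i) appeal to a continuity and monotonicity argument -- the minimum cost is non-increasing in feedback quality and continuous in $N_\textup{f}$, and the limiting scheme is feasible with $y_\textup{f}=0$ -- or, more robustly, (ii) re-run the proof of Theorem~\ref{mainy} after deleting $y_\textup{f}^{t-1}$ from the transmitter's information set throughout. In the second route the data-processing inequality (Proposition~\ref{dataproc}), the entropy-power inequality (Proposition~\ref{entpineq}), and the least-squares characterization of Proposition~\ref{lse} all apply without modification, yielding the same structural conclusion that $z(t)=(\sqrt{P}/\sigma_t)\,\E\{\tilde x(t)\mid x^t\}$ is optimal and that the induced decoder $\F$ is the Kalman filter driven by $y^{t-1}$ alone; propagating this through the same innovations computation that produced Theorem~\ref{mainnofb2} then delivers the displayed state-space realization.
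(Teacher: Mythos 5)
Your proposal matches the paper's own (implicit) derivation: the paper offers no standalone proof of the corollary, obtaining it exactly as you do by setting $y_\textup{f}=0$ (equivalently $N_\textup{f}\to\infty$) in Theorem~\ref{mainnofb2}, so that $\hat n(t)=0$, $\tilde n(t)=n(t)$, and the recursions specialize accordingly. Your side observation is also right: the additive term $\mathrm{diag}\bigl(K^2(t)N^2/(N+N_\textup{f}),\,b^2\bigr)$ tends to $\mathrm{diag}(0,b^2)$ rather than vanishing, so the $b^2$ entry should survive in the corollary's covariance recursion (its absence in the printed display is a typographical omission, since $V_{xx}(t+1)=a^2V_{xx}(t)+b^2$ must hold regardless of feedback).
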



\subsection{Noiseless Feedback}

Another interesting special case, which has been solved in \cite{bansal:basar:1988}, is when we have perfect feedback from the receiver to the transmitter, as depicted in Figure \ref{noiselessFB}. We will reproduce this result using our approach, and furthermore, give necessary and sufficient conditons for the estimation error to be bounded for the case $|a|\ge 1$. 

Let
\begin{equation}
 \label{nobfxhat}
 \begin{aligned}
 \hat{x}(t|t) 	:&= \E\{x(t) | y^{t}\}\\
 			&= \E\{x(t) | y^{t-1}, y(t)\}\\
			&= \E\{\hat{x}(t)+\tilde{x}(t) | y^{t-1}, z(t)+n(t)\}\\
 			&= \hat{x}(t) + \E\left\{\tilde{x}(t) \Big{|}  \frac{\sqrt{P}}{\sigma_t}\tilde{x}(t) + n(t)\right\}\\
			&= \hat{x}(t) + \frac{\sigma_t\sqrt{P}}{P+N}\left( \frac{\sqrt{P}}{\sigma_t}\tilde{x}(t) + n(t) \right ) \\
 			&=  \hat{x}(t) + \frac{P}{P+N} \tilde{x}(t) + \frac{\sigma_t\sqrt{P}}{P+N}n(t),\\
			&
 \end{aligned}
 \end{equation}
and
 \begin{equation}
 \label{Xtilde}
 \begin{aligned}
 \tilde{x}(t|t) 	:&= x(t) - \hat{x}(t|t)\\
 			&= \hat{x}(t) + \tilde{x}(t) - \hat{x}(t|t)\\
 			&= \frac{N}{P+N}\tilde{x}(t) - \frac{\sigma_t\sqrt{P}}{P+N}n(t)
 \end{aligned}
 \end{equation}

 Also, (\ref{nobfxhat})-(\ref{Xtilde}) give

 \begin{equation}
 \label{Xhat2}
 \begin{aligned}
 \hat{x}(t+1) 	&= \E\{x(t+1) | y^t\}\\
 			&= \E\{ax(t)+ bw(t) | y^t\}\\
 			&= a\hat{x}(t|t),
 \end{aligned}
 \end{equation}
  and
  \begin{equation}
 \label{xtilde2}
 \begin{aligned}
 \tilde{x}(t+1) 	&= x(t+1) - \hat{x}(t+1)\\
 				&= a\tilde{x}(t|t) +bw(t)\\
				&= \frac{N}{P+N}a\tilde{x}(t) - \frac{\sigma_t\sqrt{P}}{P+N}an(t)+bw(t)
 \end{aligned}
 \end{equation}
 By considering the state estimation error dynamics in (\ref{xtilde2}), the reader might be tempted to conclude that the decoder will be able to track the state $x(t)$ if and only if
   $$
   \frac{N}{P+N} |a| <1.
   $$
 
 However, this conclusion is erroneous since the gain of the noise $n(t)$ depends on $\sigma_t = \sqrt{\E\{\tilde{x}^2(t)\}}$. What we need to consider is the dynamics of the \textit{variance} of the estimation error $\tilde{x}^2(t)$ as follows.
 
  \begin{equation}
 \label{x2tilde}
 \begin{aligned}
 \E\{\tilde{x}^2(t|t)\} 	&= \E\left\{\left(\frac{N}{P+N}\tilde{x}(t) - \frac{\sigma_t\sqrt{P}}{P+N}n(t)\right)^2\right\}\\
 					&= \left(\frac{N}{P+N}\right)^2  \E\{\tilde{x}^2(t)\}
					+ \left(\frac{\sigma_t\sqrt{P}}{P+N}\right)^2\E\{n^2(t)\}\\
					&= \frac{N^2}{(P+N)^2}\sigma_t^2
					+ \frac{\sigma_t^2 P}{(P+N)^2} N\\
					&= \frac{N}{P+N}\sigma_t^2\\
					&= \frac{N}{P+N}\E\{\tilde{x}^2(t)\}
 \end{aligned}
 \end{equation}
 Equations (\ref{xtilde2}) and (\ref{x2tilde}) give
 
 \begin{equation}
 \label{finalineq}
 \begin{aligned}
 \E \{\tilde{x}^2(t+1)\} 	&= a^2\E \{\tilde{x}^2(t|t)\} + b^2\E \{w^2(t)\}\\
 					&= \frac{N}{N+P}  a^2\cdot \E \{\tilde{x}^2(t)\} + b^2.
  \end{aligned}
 \end{equation}

The recurrence equation (\ref{finalineq}) implies that a stationary solution to Problem 1 for the case 
$n_\textup{f} = 0$ exists if and only if
$$
 1> \frac{{N}}{P+N}|a|^2,
 $$
which is equivalent to 
$$
\log_2(|a|) < \frac{{1}}{2}\log_2 \left(1+ \frac{P}{N}\right)
$$
Note that the capacity $C$ of the Gaussian channel is given by
$$
C = \frac{{1}}{2}\log_2 \left(1+ \frac{P}{N}\right)
$$
so a necessary and sufficient condition for the mean squared estimation error to be finite is 
$$
\log_2(|a|) < C.
$$
A similar result for stabilization of a control system over a discrete memoryless channel has been obtained in \cite{tatikonda:2004}.


\subsection{Stationarity}
In this section, we will present conditions under which a stationary solution exists to Problem \ref{p1} for the case $\phi(t) = y(t)$
(that is a solution as $T\rightarrow \infty$).
Let $\tilde{x}(t) = x(t) - \hat{x}(t)$ be the estimation error of $x(t)$ and consider the state space equations (\ref{nofbsseq}) of the optimal estimate. After some algebra, we get the state space equations for the estimation error (see (\ref{xtilde}) in the proof of Theorem \ref{mainnofb2} in the Appendix):
\begin{align*}
\tilde{x}(t+1) 	&= a\tilde{x}(t) - a\kappa(t) y(t) + bw(t)\\
				&= a\tilde{x}(t) - a\kappa(t) \frac{\sqrt{P}}{\sigma_t} \check{x}(t) - 
							a\kappa(t) n(t) + bw(t)\\
				&= a\tilde{x}(t) - \frac{aP}{P+N} \check{x}(t) - 
							K(t) n(t) + bw(t)\\			
				&= a\tilde{x}(t) - \frac{aP}{P+N} (\tilde{x}(t) - \bar{x}(t)) - 
						K(t) n(t) + bw(t)\\
				&= \frac{aN}{P+N} \tilde{x}(t) - K(t) n(t) + bw(t) + \frac{aP}{P+N}\bar{x}(t)
\end{align*}
 with
 $$
 \bar{x}(t+1) 	= a\bar{x}(t)- K(t) \tilde{n}(t).
 $$
 Obviously, for $\tilde{n}(t) \neq 0$(that is $N_\textup{f} >0$),  the state $\bar{x}(t)$ can be stationary if 
 and only if $|a|<1$. In addition, in order for $\tilde{x}(t)$ to be stationary, we must have
 $$
 1 > \frac{{N}}{P+N}|a|^2.
 $$
Clearly, the inequality above is always fulfilled for $|a|<1$. We conclude the result above:
 
 \begin{theorem} 
 	Problem \ref{p1} with $\phi(t) = y(t)$ has a stationary solution for $N_\textup{f} >0$ as $T\rightarrow \infty$ if and only if $|a|<1$ and there are no filters $\F$ and $\G$ that achieve a finite mean square error for $|a|\ge 1$. \\
\end{theorem}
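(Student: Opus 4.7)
The plan is to work directly from the closed-loop error dynamics derived in the paragraph preceding the theorem, namely the coupled linear time-varying recursions
\begin{align*}
\bar{x}(t+1) &= a\bar{x}(t) - K(t)\tilde{n}(t),\\
\tilde{x}(t+1) &= \tfrac{aN}{P+N}\tilde{x}(t) + \tfrac{aP}{P+N}\bar{x}(t) - K(t)n(t) + bw(t),
\end{align*}
which describe the error behavior under the globally optimal scheme of Theorem~\ref{mainnofb2}. Combining these with the optimality statement of Theorem~\ref{mainy}, which makes this scheme the MSE minimizer over all causal (possibly nonlinear) encoder-decoder pairs of the form~(\ref{nonlin}), is enough to handle both directions of the iff.

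For sufficiency ($|a|<1$), I would take variances in the $\bar{x}$-recursion. A direct computation gives $\E\tilde{n}^2 = NN_\textup{f}/(N+N_\textup{f})$, so the driving variance $K^2(t)\,\E\tilde{n}^2$ is bounded whenever $K(t)$ is. Hence $V_{\bar{x}}(t) = \E\bar{x}^2(t)$ satisfies a scalar linear recursion with stable coefficient $a^2<1$, and is therefore bounded. Substituting the bounded $\bar{x}(t)$ into the $\tilde{x}$-recursion, whose homogeneous coefficient $aN/(P+N)$ has magnitude strictly less than $|a|<1$, gives another stable scalar recursion with bounded forcing, so $\E\tilde{x}^2(t)$ is also bounded. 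To close the bootstrap I would verify that the covariance iteration~(\ref{cov}) admits a positive fixed point under $|a|<1$, using monotonicity of the iteration on the cone of $2\times 2$ positive semidefinite matrices to argue convergence from the specified initial conditions; this confirms that $\sigma_t$ and hence $K(t)$ are indeed bounded throughout.

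For necessity ($|a|\ge 1$), the global optimality of~(\ref{nofbsseq}) reduces the task to showing that this very scheme has unbounded MSE. The $\bar{x}$-recursion is a scalar linear system with $|a|\ge 1$ driven by a noise of variance $K^2(t)\,NN_\textup{f}/(N+N_\textup{f})$, which is strictly positive whenever $K(t)\ne 0$. I would argue that $K(t)$ cannot tend to zero: otherwise $\sigma_t\to 0$, but the $\tilde{x}$-recursion injects variance $b^2$ at each step through $bw(t)$, which via $\sigma_t^2 = V_{xx}(t) - 2V_{sx}(t)+ V_{ss}(t)$ forces $\sigma_t^2 \ge b^2$ eventually and contradicts $\sigma_t\to 0$. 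Consequently $V_{\bar{x}}(t)$ diverges, and through the forcing term $(aP/(P+N))\bar{x}(t)$ in the $\tilde{x}$-recursion so does $\E\tilde{x}^2(t)$.

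The principal obstacle will be ruling out the degenerate possibility $K(t)\to 0$ rigorously, because $K(t)$, $\sigma_t$, and the block covariances $V_{ss}, V_{sx}, V_{xx}$ are all coupled through the nonlinear iteration~(\ref{cov}); a naive dominant-coefficient argument is circular. I expect the cleanest route is to treat~(\ref{cov}) as a matrix-valued Riccati-type map and show that for $|a|\ge 1$ and $N_\textup{f}>0$ it admits no positive fixed point, so trajectories from any admissible initial covariance must diverge, independently of whether $\sigma_t$ happens to shrink along the way.
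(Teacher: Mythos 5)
Your approach is essentially the paper's own: the paper derives exactly the two coupled recursions for $\bar{x}$ and $\tilde{x}$ that you start from and then simply asserts ("obviously") that $\bar{x}$ can be stationary iff $|a|<1$, so your proposal is, if anything, the more careful version of the same argument. The one obstacle you flag --- ruling out $K(t)\to 0$ --- is not actually circular and needs no Riccati fixed-point analysis: from $s(t+1)=\frac{aN}{P+N}s(t)+\frac{aP}{P+N}x(t)+K(t)\hat{n}(t)$ and $x(t+1)=ax(t)+bw(t)$ one gets the closed scalar recursion $\sigma_{t+1}^2=\frac{a^2N^2}{(P+N)^2}\bigl(1+\frac{P}{N+N_\textup{f}}\bigr)\sigma_t^2+b^2\ge b^2$, so $\sigma_t$ is bounded below by $|b|$ and $K(t)$ is bounded away from zero whenever $ab\ne 0$ (and the same recursion, whose coefficient is at most $a^2\frac{N}{P+N}<1$ for $|a|<1$, also gives the boundedness of $K(t)$ you need in the sufficiency direction).
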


It's interesting to see the difference between the noiseless feedback case and the noisy feedback one. This raises the question of whether the feedback function $\phi$ could be chosen differently in order to get filters that can track a state as the time horizon goes to infintiy. Indeed, this turns out to be the case as will be shown in the sequel.
Noiseless feedback of the output, $\phi(t)=y(t)$, makes the state estimates at the receiver available to the transmitter. This would equivalently correspond to the case of noiseless feedback of the state estimates, that is for $n_\textup{f}=0$ and $\phi(t)=\hat{x}(t)$ as shown in Figure \ref{xhatfb}.


\subsection{Noisy Feedback of the State Estimates}
Suppose that the receiver transmits its state estimates $\hat{x}(t)$ back to the transmitter overa noisy channel. Being inspired by the noiseless feedback results, we can construct the 
measurements $x(t) - y_\textup{f}(t) = \tilde{x}(t) - n_\textup{f}(t)$ and set 
$\check{x}(t) = \tilde{x}(t) - n_\textup{f}(t)$. However, this strategy is not necessarily optimal. The next result gives the optimal communication scheme.

 \begin{theorem}
 \label{mainxhat2}
Consider Problem \ref{p1} with $a(t) = a$, $b(t) = b$, and $\phi(t)=\hat{x}(t)$. 
The state space realization of the optimal communication scheme 
 is given by

\begin{equation}
 \begin{aligned}
	\hat{x}(t+1) 		&= a\hat{x}(t) +  K(t) y(t)\\
	\check{x}(t+1)
			&= aN(P+N)^{-1} \check{x}(t) + x(t+1)-ax(t)  \\
			& ~~~+ a\bar{\sigma}_t^2 (\bar{\sigma}_t^2 + N_\textup{f})^{-1}(x(t) - \check{x}(t) - y_\textup{f}(t))\\ 		z(t) 			&= \frac{\sqrt{P}}{\sigma_t} \check{x}(t),\\
 \end{aligned}
 \end{equation}
 where 
\begin{eqnarray}
	&K(t) &= a\sigma_t \sqrt{P}(P+N)^{-1}\label{K} \\
	&\sigma_t^2 &= \E \check{x}^2(t) 
\end{eqnarray}
$\sigma_0^2 = 0$, $\bar{\sigma}_0^2 = 0$,
\begin{equation}
\label{sigmanonlin}
\begin{aligned}
		\sigma_{t+1}^2 = \frac{a^2 N^2}{(P + N)^{2}}\sigma_t^2 + \frac{a^2}{\bar{\sigma}_t^2 + N_\textup{f}}\bar{\sigma}_t^4 + b^2,
\end{aligned}
\end{equation}
and
\begin{equation}
\label{sigmabar}
\begin{aligned}
		\bar{\sigma}_{t+1 }^2 = \frac{a^2 N_\textup{f}^2}{\bar{\sigma}_t^2 + N_\textup{f}}\bar{\sigma}_t^2 + \frac{a^2PN}{(P + N)^2}\sigma_t^2 
\end{aligned}
\end{equation}
\end{theorem}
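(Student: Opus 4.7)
I plan to follow the converse-achievability template of Theorem~\ref{mainy}, extending it to the new feedback structure $\phi(t)=\hat{x}(t)$. The converse bound on the MSE will depend only on (i)~the transmitter's $\sigma$-algebra $\mathcal{G}_t:=\sigma(x^t,y_\textup{f}^{t-1})$, (ii)~the per-symbol power constraint, and (iii)~the data-processing/entropy-power chain of Propositions~\ref{entpineq}--\ref{logerror}; none of these is sensitive to the particular form of $\phi$. The same inequalities then apply verbatim and force the optimal encoder to have the form $z(t)=(\sqrt{P}/\sigma_t)\check{x}(t)$ with $\check{x}(t)=\E\{\tilde{x}(t)\mid\mathcal{G}_t\}$, the decoder to be $\hat{x}(t)=\E\{x(t)\mid y^{t-1}\}$, and all variables to be jointly Gaussian, verified inductively since the resulting scheme is linear.

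With the structural form in hand, I would derive the state-space realization in two parts. First, on the decoder side, the standard innovations computation gives $\hat{x}(t+1)=a\hat{x}(t)+K(t)y(t)$ with $K(t)=a\sigma_t\sqrt{P}/(P+N)$, essentially as in the proof of Theorem~\ref{mainnofb2}. The resulting error dynamics read $\tilde{x}(t+1)=a\tilde{x}(t)-(aP/(P+N))\check{x}(t)-K(t)n(t)+bw(t)$.

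Second, I would obtain the transmitter-side recursion by taking $\E\{\cdot\mid\mathcal{G}_{t+1}\}$ of this error equation. Since $w(t)=(x(t+1)-ax(t))/b$ and $\check{x}(t)$ are $\mathcal{G}_{t+1}$-measurable and $n(t)\perp\mathcal{G}_{t+1}$, what remains is to compute $\E\{\tilde{x}(t)\mid\mathcal{G}_{t+1}\}$. The new information beyond $\mathcal{G}_t$ is $w(t)$, which is independent of the residual $\tilde{x}(t)-\check{x}(t)$, together with the feedback innovation $\nu(t):=y_\textup{f}(t)-(x(t)-\check{x}(t))$. Using the identity $\hat{x}(t)-\E\{\hat{x}(t)\mid\mathcal{G}_t\}=-(\tilde{x}(t)-\check{x}(t))$, the Gaussian projection onto $\nu(t)$ (whose variance is $\bar{\sigma}_t^2+N_\textup{f}$ and whose covariance with $\tilde{x}(t)-\check{x}(t)$ is $-\bar{\sigma}_t^2$) yields $\E\{\tilde{x}(t)\mid\mathcal{G}_{t+1}\}=\check{x}(t)+(\bar{\sigma}_t^2/(\bar{\sigma}_t^2+N_\textup{f}))(x(t)-\check{x}(t)-y_\textup{f}(t))$, which produces the stated recursion for $\check{x}(t+1)$ after substitution.

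Finally, the Riccati-type equations~(\ref{sigmanonlin}) and~(\ref{sigmabar}) follow from three ingredients: (a)~$\E\tilde{x}^2(t+1)=a^2\bar{\sigma}_t^2+a^2N\sigma_t^2/(P+N)+b^2$, obtained from the error recursion using $\E\tilde{x}(t)\check{x}(t)=\sigma_t^2$ and $\E\tilde{x}^2(t)=\sigma_t^2+\bar{\sigma}_t^2$; (b)~the MMSE residual variance $\bar{\sigma}_t^2 N_\textup{f}/(\bar{\sigma}_t^2+N_\textup{f})$ of the transmitter's updated estimate of $\tilde{x}(t)$, combined with the channel-noise contribution $K^2(t)N=a^2PN\sigma_t^2/(P+N)^2$ to form $\bar{\sigma}_{t+1}^2$; and (c)~the orthogonal decomposition $\E\tilde{x}^2(t+1)=\sigma_{t+1}^2+\bar{\sigma}_{t+1}^2$, which then determines $\sigma_{t+1}^2$. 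The main obstacle will be the converse: the entropy-power chain of Theorem~\ref{mainy} must be re-examined when the transmitter's $\sigma$-algebra no longer contains the decoder's measurements, and the joint-Gaussianity induction is the linchpin of the argument; the subsequent covariance bookkeeping is routine but algebraically dense.
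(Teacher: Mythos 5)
Your proposal is correct and follows essentially the same route as the paper: the paper likewise ports the structural result of Theorem \ref{mainy} without re-deriving the converse, obtains the $\check{x}$ recursion by projecting $a\bar{x}(t)$ onto the transmitter-constructed measurement $x(t)-\check{x}(t)-y_\textup{f}(t)$ (your feedback innovation $\nu(t)$ up to sign), and reads the variance recursions off the joint $(\check{x},\bar{x})$ dynamics; your explicit re-examination of the entropy-power converse for the new $\phi$ is a point the paper glosses over with ``similar to Theorem \ref{mainnofb2}.'' One substantive remark: carried to completion, your step (b) yields $\bar{\sigma}_{t+1}^2 = a^2 N_\textup{f}\bar{\sigma}_t^2(\bar{\sigma}_t^2+N_\textup{f})^{-1} + a^2PN\sigma_t^2(P+N)^{-2}$, i.e.\ with $N_\textup{f}$ rather than the $N_\textup{f}^2$ appearing in (\ref{sigmabar}), and your own consistency check (c) --- that $\E\tilde{x}^2(t+1)=\sigma_{t+1}^2+\bar{\sigma}_{t+1}^2$ must reproduce the expression in (a) --- holds only for your version, so the exponent in the stated theorem appears to be a typo rather than a flaw in your derivation.
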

\begin{proof}
See the Appendix.
\end{proof}

Now suppose that there is a stationary solution to (\ref{sigmanonlin}) - (\ref{sigmabar}). Then,
\begin{equation}
\label{sigmanonlinstat}
\begin{aligned}
		\sigma^2 = \frac{a^2 N^2}{(P + N)^{2}}\sigma^2 + \frac{a^2}{\bar{\sigma}^2 + N_\textup{f}}\bar{\sigma}^4 + b^2
\end{aligned}
\end{equation}
and
\begin{equation}
\label{sigmabarstat}
\begin{aligned}
		\bar{\sigma}^2 = \frac{a^2 N_\textup{f}^2}{\bar{\sigma}^2 + N_\textup{f}}\bar{\sigma}^2 + \frac{a^2PN}{(P + N)^2}\sigma^2 
\end{aligned}
\end{equation}
The pair of equations are equivalent to a couple of forth order polynomial equations in the two variables $(\sigma^2, \bar{\sigma}^2)$, and solving these equations can be found easily using standard numerical tools.

\section{Conclusions}
We considered the problem of optimal encoder/decoder filter design over a Shannon Gaussian channel with \textit{noisy feedback} to estimate the state of a scalar linear dynamical system.  
We showed that optimal encoders and decoders are linear filters
with a finite memory and we give explicitly the state space realization of the optimal filters. We also presented the solution of the case where the transmitter has access to noisy measurements of the state. We derived a separation principle for this communication scheme. Necessary and sufficient conditions for the existence of a stationary solution where also given. 

Future work will consider the case where the noise process $n$ is colored for some linear filter $\mathbf{S}\ne I$. Also, the non-scalar case is challenging as we can't rely on the information theoretic inequalities used in this paper for the higher dimensional case.

\bibliography{../../ref/mybib}


\section*{Appendix}
 

\subsection*{Proof of Theorem \ref{mainy}}

	Suppose that $\E \{g_t(x^t)\} = \alpha_t$ where $\{\alpha_k\}_{k=0}^t$ are
	deterministic real numbers independent of $x^t$ and are known at the encoder $g_t$ 
	and decoder $f_t$. 
 Note that $y(t) = g_t(x^t)+n(t)$. The estimate of $x(t+1)$ based on $y(k)$, 
 $k=0, ..., t$, is the same as the estimate of $x(t+1)$ based on $y(k)-\alpha_k$ for $k=0, ..., t$ since $\alpha_k$ is deterministic and known at the decoder.
But it means that we can replace $g_t(x^t)$ with $g_t'(x^t)=g(x^t)-\alpha_t$, and $g_t'(x^t)$ satisfies both $\E \{g_t'(x^t)\} = 0$ and the power constraint $\E|g_t'(x^t)|^2 \leq P$ since
\begin{equation*}
 \begin{aligned}
\E|g_t'(x^t)|^2 	&= \E|g_t(x^t)-\alpha|^2 \\
			&= \E |g_t(x^t)|^2 - \alpha^2\\ 
			&= P-\alpha^2\leq P.
 \end{aligned}
 \end{equation*} 
 Thus, without loss of generality,  we may restrict the encoders $g$ to the set 
$$\{ g~ | ~\E \{g(x^t)\} = 0\}.$$ 

We will now prove that the optimal filters are linear by induction. 
Suppose that $g_{k-1}$ and $f_{k-1}$ are linear for $k=1, ..., t$. 
Then, $\tilde{x}^t$, $x^t$, $y^{t-1}$, and $y_{\textup{f}}^{t-1}$ are jointly Gaussian. 

Let $\hat{x}(t|t) = f'_t(y^t)$ be the optimal estimate of $x(t)$ based on $y^t$ and  
let $\tilde{x}(t|t)  = x(t) - \hat{x}(t|t)$, for $t=0, ..., T$. Then, $f'_t(y^t) = \E\{x(t) | y^t\}$ according to Proposition \ref{lse}. Now we have that
\begin{equation}
 \label{nofbestimate1}
 	\begin{aligned}
		 \hat{x}(t|t) 	&= \E\{x(t)|y^t\}\\
		 				&=  \E\{(\hat{x}(t) + \tilde{x}(t)  | y^{t}\}\\
						&= \hat{x}(t) + \E\{\tilde{x}(t)  | y^t \},
 	\end{aligned}
 \end{equation} 
\begin{equation}
 \label{nofbestimate2}
 	\begin{aligned}
\tilde{x}(t+1) 	&= x(t+1) -\hat{x}(t+1) \\
				&= ax(t)+bw(t)-a\hat{x}(t|t)\\ &= a\tilde{x}(t|t) + bw(t)\\
\end{aligned}
 \end{equation} 
We see that minimizing $\E |\tilde{x}(t+1)|^2$ is equivalent to minimizing the mean square error of
$$
\tilde{x}(t|t) = \tilde{x}(t) - \E\{\tilde{x}(t)  | y^t \}
$$
at the decoder. Now introduce

$$
\check{x}(t) := \E\{\tilde{x}(t)  | x^t, y_\textup{f}^{t-1} \}
$$
and
$$
\bar{x}(t) := \tilde{x}(t) - \check{x}(t) .
$$
Then, $\check{x}(t)$ is a linear function of $x^t$ and $y_\textup{f}^{t-1}$, since $\tilde{x}(t)$,  $x^t$, and $y_\textup{f}^{t-1}$ are jointly Gaussian by the induction hypothesis.
Thus, $\bar{x}(t)$ is independent of $\check{x}(t)$, $x^t$, and $y_\textup{f}^{t-1}$. This implies that $\bar{x}(t)$ is independent of $g_t(x^t, y_\textup{f}^{t-1})$ and $g_t(x^t, y_\textup{f}^{t-1}) + n(t) = y(t)$.

The Markov chain 
$$\check{x}(t) \rightarrow g_t(x^t, y_\textup{f}^{t-1}) \rightarrow y(t) 
= g_t(x^t, y_\textup{f}^{t-1}) + n(t),$$
together with Proposition \ref{dataproc}, gives
 
 \begin{equation}
 \label{nofbestimate4}
	\begin{aligned}
 		I(\check{x}(t); y(t))	&\leq I(g_t(x^t, y_\textup{f}^{t-1}); y(t)). 
	\end{aligned}	
 \end{equation} 
 
 The Shannon capacity of a Gaussian channel gives an upper bound for the mutual information between the transmitted message $z(t)=g_t({x}^{t}, y_\textup{f}^{t-1})$ and received message $y(t)$ (see \cite{gallager}):
 \begin{equation}
 \label{nofbestimate5}
 I(g_t(x^t, y_\textup{f}^{t-1}); y(t))	\leq 	\frac{1}{2} \log_2{\left(1+\frac{P}{N}\right)}.
 \end{equation}
 Combining (\ref{nofbestimate4})-(\ref{nofbestimate5}), we get
\begin{equation}
\label{nofbestimate6}
	2^{-2I(\check{x}(t); y(t))} \geq \frac{N}{P+N}
\end{equation}
with equality if $\check{x}(t)$ and $y(t)$ are mutually Gaussian and 
$g_t(x^t, y_\textup{f}^{t-1})=\frac{\sqrt{P}}{\sigma_t} \check{x}(t)$ 
with $\sigma_t^2=  \E |\check{x}(t)|^2$. 
From the definition of mutual information, we have that
\begin{equation}
\label{nofbestimate7}
 	\begin{aligned}
		h(\check{x}(t) |y(t))	&= h(\check{x}(t) ) - I(\check{x}(t) ; y(t)).\\
 	\end{aligned}
 \end{equation} 
Now we get
\begin{eqnarray}
		2\pi e \E\{|\tilde{x}(t|t)|^2\}	&\geq& 2^{2h(\tilde{x}(t) | y^t)} \label{eq1} \\
							&=& 2^{2h(\tilde{x}(t) | y(t) )} \label{eq2}\\
							&=& 2^{2h(\check{x}(t) + \bar{x}(t) | y(t))} \nonumber\\
							&=& 2^{2h(\check{x}(t) | y(t))+ h(\bar{x}(t))} \label{eq3}\\
							&\geq& 2^{2h(\check{x}(t) | y(t))} + 2^{2h(\bar{x}(t))} \label{eq4}\\
							&=& 2^{2h(\check{x}(t) ) - 2I(\check{x}(t) ; y(t))} + 2^{2h( \bar{x}(t) )} ~~~~ \label{eq5}\\
							&\geq& \frac{N}{P+N} 2^{2h(\check{x}(t) )} + 2^{2h( \bar{x}(t) )}\label{eq6}
 \end{eqnarray} 
where (\ref{eq1}) follows from Proposition \ref{logerror}(with equality if $\tilde{x}(t)$ and $y^t$ are jointly Gaussian), (\ref{eq2}) follows from the fact that
 $\tilde{x}(t)$ is independent of $y^{t-1}$, (\ref{eq3}) follows from the fact that $\bar{x}(t)$ is independent of $\check{x}(t)$ and $y(t)$,  (\ref{eq4}) follows from the entropy power inequality(Proposition \ref{entpineq}), (\ref{eq5}) follows from equation (\ref{nofbestimate7}), and (\ref{eq6}) follows from inequality (\ref{nofbestimate6}). Furthermore, equality holds in (\ref{eq1})-(\ref{eq6}) if 
 $$z(t) = g_t(x^t, y_\textup{f}^{t-1}) = \frac{\sqrt{P}}{\sigma_t} \check{x}(t)$$
 with $\sigma_t^2=  \E |\check{x}(t)|^2$. This completes the proof.

\subsection*{Proof of Theorem \ref{mainnofb2}}
Let
$
\hat{x}(t) = \E\{ x(t)  | y^{t-1} \}
$
,
$
\tilde{x}(t) = x(t) - \hat{x}(t), 
$
$
\hat{x}(t|t) = \E\{ x(t)  | y^t \},
$
and
$
\tilde{x}(t|t) = x(t) - \hat{x}(t|t).
$

Then,
\begin{equation}
 \label{xhat}
 	\begin{aligned}
\hat{x}(t+1) 	&= a\hat{x}(t|t)\\
				&= a \E\{\hat{x}(t) + \tilde{x}(t)  | y^t \}\\
				&= a \hat{x}(t) + a \E\{\tilde{x}(t)  | y(t) \}\\
\end{aligned}
 \end{equation} 
and
\begin{equation}
 \label{tildex}
 	\begin{aligned}
\tilde{x}(t+1) 	
				&= a\tilde{x}(t) - a\E\{\tilde{x}(t)  | y(t) \} + bw(t).\\
\end{aligned}
 \end{equation} 
 
According to Theorem  \ref{mainy}, the optimal signal $z$ is given by 
\begin{equation*}
 	\begin{aligned}
		\check{x}(t)  	&=  \E\{\tilde{x}(t) | x^{t}, y_\textup{f}^{t-1}\}\\
 		z(t) 			&= \frac{\sqrt{P}}{\sigma_t} \check{x}(t) 
	\end{aligned}
 \end{equation*} 
 with $\sigma_t^2=  \E|\check{x}(t)|^2$. 
 Now recall that $y(t) = z(t) + n(t)$, 
 $
 	\check{x}(t) =  \E\{\tilde{x}(t) | x^{t}, y_\textup{f}^{t-1} \}
 $
,
 $
 \bar{x}(t)  = \tilde{x}(t)  - \check{x}(t) 
 $
  , and $ \bar{x}(t) $ is orthogonal to $x^t$ and hence to $y(t)$. Since
 $\tilde{x}(t)$ and $y(t)$ are jointly Gaussian, 
 $\E\{\tilde{x}(t) | y(t)\}$ is a linear function of $y(t)$ given by
 \begin{equation}
 \label{tildex_estimate}
 	\begin{aligned}
 		\E\{\tilde{x}(t) | y(t)\} 	&= \E\{\check{x}(t) + \bar{x}(t) | y(t)\}\\
						&= \E\{\check{x}(t) | y(t)\}+ \E\{\bar{x}(t) | y(t)\}\\
						&= \E\{\check{x}(t) | y(t)\}\\
						&= \mathbf{cov}\{\check{x}(t),y(t)\} (\mathbf{cov}\{y(t),y(t)\})^{-1}y(t)\\
						&= \kappa(t) y(t)
 	\end{aligned}
 \end{equation} 
 with 
 \begin{equation}
 \label{kappa}
 	\kappa(t) = \sigma_t\sqrt{P}(P+N)^{-1}.
\end{equation}
Then, (\ref{xhat})-(\ref{tildex_estimate}) imply
\begin{equation}
 \label{xhat2}
 	\begin{aligned}
		\hat{x}(t+1) 	&= a \hat{x}(t) + a \kappa(t) y(t)\\
					&= a \hat{x}(t) + a \kappa(t) \frac{\sqrt{P}}{\sigma_t} \check{x}(t) 
							+ a\kappa(t) n(t),
	\end{aligned}
 \end{equation} 

\begin{equation}
\label{xtilde}
 	\begin{aligned}
		\tilde{x}(t+1) 	&= a\tilde{x}(t) - a\kappa(t) y(t) + bw(t)\\
						&= a\tilde{x}(t) - a\kappa(t) \frac{\sqrt{P}}{\sigma_t} \check{x}(t) - 
							a\kappa(t) n(t) + bw(t)\\
\end{aligned}
 \end{equation} 
 
 The encoder has access to $ \check{x}^t$ at time $t+1$. It has also access to
$z^t$ and $y_\textup{f}^t$, which implies that it has access to
$$
y_\textup{f}(k) - z(k) = n(k) + n_\textup{f}(k)
$$ 
for $k = 1, ..., t$. Now we have that
\begin{equation}
\label{nhat}
	\begin{aligned}
		\hat{n}(t) 	&= \E\{n(t) | n(t) + n_\textup{f}(t)\} \\
					&= \frac{N}{N + N_\textup{f}} (n(t) + n_\textup{f}(t))
	\end{aligned}
 \end{equation} 
 and
\begin{equation}
\label{ntilde}
	\begin{aligned}
		\tilde{n}(t) 	&= n(t) -\hat{n}(t) \\
					&= \frac{N_\textup{f}}{N + N_\textup{f}} n(t)  - \frac{N}{N + N_\textup{f}}n_\textup{f}(t)
	\end{aligned}
 \end{equation}

We will show that
\begin{equation}
\label{checkx}
	\check{x}(t+1) = a\check{x}(t) - a\kappa(t) \frac{\sqrt{P}}{\sigma_t} \check{x}(t) 
							 - a\kappa(t)\hat{n}(t) + bw(t)\\
\end{equation}
and

\begin{equation}
\label{barx}
	\bar{x}(t+1) = a\bar{x}(t) 
							 - a\kappa(t)\tilde{n}(t) \\
\end{equation}

First we note that $\bar{x}$ as defined in (\ref{barx}) depends only on the channel noise
estimation error $\tilde{n}$ and is therefore independent of $x$, $y_\textup{f}$, and $\check{x}$. Now (\ref{checkx})-(\ref{barx}) give

\begin{equation}
\label{erreq}
 	\begin{aligned}
		 	\check{x}(t+1) &+ \bar{x}(t+1)\\
						&= a\check{x}(t) - a\kappa(t) \frac{\sqrt{P}}{\sigma_t} \check{x}(t) 
							 - a\kappa(t)\hat{n}(t) + bw(t)\\ 
							&~~~+ a\bar{x}(t) 
							 - a\kappa(t)\tilde{n}(t)\\
						&= a(\check{x}(t) + \bar{x}(t)) - 
							a\kappa(t) \frac{\sqrt{P}}{\sigma_t} \check{x}(t) \\
							&~~~ - a\kappa(t) (\hat{n}(t) + \tilde{n}(t))+ bw(t)\\	
						&= a\tilde{x}(t) - a\kappa(t) \frac{\sqrt{P}}{\sigma_t} \check{x}(t)
						- a\kappa(t) n(t)+ bw(t),\\
	\end{aligned}
 \end{equation} 
 which is exactly the expression for the dynamics of $\tilde{x}(t+1)$ 
given by (\ref{xtilde}). This establishes (\ref{checkx}) - (\ref{barx}). Now we have
\begin{equation}
\label{x-s}
 	\begin{aligned}
		x(t)	&= \E\{x(t) | x^t, y_\textup{f}^{t-1}\}\\
			&= \E\{\hat{x}(t) + \tilde{x}(t) | x^t, y_\textup{f}^{t-1}\}\\
			&= \E\{\hat{x}(t)  | x^t, y_\textup{f}^{t-1}\} + \E\{ \tilde{x}(t) | x^t, y_\textup{f}^{t-1}\}\\
			&= \E\{\hat{x}(t)  | x^t, y_\textup{f}^{t-1}\} + \check{x}(t).
	\end{aligned}
\end{equation} 
From equation (\ref{xhat2}), we see that 
\begin{equation}
 \label{ss_checkx1}
	\E\{\hat{x}(t)  | x^t, y_\textup{f}^{t-1}\} = s(t)
\end{equation} 
where
\begin{equation}
 	\label{ss_checkx2}
 	\begin{aligned}
		s(t+1) 	&= a s(t) + a \kappa(t) \frac{\sqrt{P}}{\sigma_t} \check{x}(t) + a\kappa(t) \hat{n}(t)\\
				&= a s(t) + K(t)( z(t) + \hat{n}(t))
	\end{aligned}
\end{equation} 
 since the noise signal $\tilde{n}$ is independent of $x$ and $y_\textup{f}$.
Finally, combining (\ref{x-s}) - (\ref{ss_checkx2}) gives 
\begin{equation}
\label{x-s2}
	 \check{x}(t) = x(t) - s(t).
\end{equation}
Now set 
\begin{equation}
\label{K}
K(t) = a\kappa(t). 
\end{equation}
Then,
\begin{equation}
\label{ss_checkx3}
\begin{aligned}
s(t+1) 	&= a s(t) + a \kappa(t) \frac{\sqrt{P}}{\sigma_t} \check{x}(t) + a\kappa(t) \hat{n}(t) \\
		&= a s(t) + a \frac{P}{P+N} \check{x}(t) + K(t) \hat{n}(t) \\
		&= \left(a - \frac{aP}{P+N}\right) s(t) + \frac{aP}{P+N} {x}(t) + K(t) \hat{n}(t)\\
		&= \frac{aN}{P+N} s(t) +  \frac{aP}{P+N} {x}(t) + K(t) \hat{n}(t)\\
\end{aligned}
\end{equation}
and
\begin{equation*}
\begin{aligned}
\begin{bmatrix}
	s(t+1)\\
	x(t+1)	
\end{bmatrix}
&=
\begin{bmatrix}
	\frac{aN}{P+N} 	& \frac{aP}{P+N}\\
	0		& a	
\end{bmatrix} 
\begin{bmatrix}
	s(t)\\
	x(t)	
\end{bmatrix}+
\begin{bmatrix}
	K(t)   	& 0\\
	0 		& b	
\end{bmatrix}
\begin{bmatrix}
	 \hat{n}(t)\\
	w(t)	
\end{bmatrix}
\end{aligned}
\end{equation*}
Introduce the covariance matrix
$$
\begin{bmatrix}
	V_{ss}(t) & V_{sx}(t)\\
	V_{xs}(t) & V_{xx}(t)	
\end{bmatrix}
=
\E 
\begin{bmatrix}
	s(t)\\
	x(t)	
\end{bmatrix}
\begin{bmatrix}
	s(t)\\
	x(t)	
\end{bmatrix}^\intercal.
$$
Since $\hat{n}(t)$ and $w(t)$ are uncorrelated with $x(t)$ and $s(t)$, we get 
\begin{equation}
\label{V}
\begin{aligned}
&
\begin{bmatrix}
	V_{ss}(t+1) & V_{sx}(t+1)\\
	V_{xs}(t+1) & V_{xx}(t+1)	
\end{bmatrix} \\
&=
\E 
\begin{bmatrix}
	s(t+1)\\
	x(t+1)	
\end{bmatrix}
\begin{bmatrix}
	s(t+1)\\
	x(t+1)	
\end{bmatrix}^\intercal\\
& =
\E \left\{\left(
\begin{bmatrix}
	\frac{aN}{P+N} 	& \frac{aP}{P+N}\\
	0		& a	
\end{bmatrix} 
\begin{bmatrix}
	s(t)\\
	x(t)	
\end{bmatrix}+
\begin{bmatrix}
	K(t)   	& 0\\
	0 		& b	
\end{bmatrix}
\begin{bmatrix}
	 \hat{n}(t)\\
	w(t)	
\end{bmatrix}
\right)\times \right. \\
&
\left.
\hspace{9.5mm}
\left(
\begin{bmatrix}
	\frac{aN}{P+N} 	& \frac{aP}{P+N}\\
	0		& a	
\end{bmatrix} 
\begin{bmatrix}
	s(t)\\
	x(t)	
\end{bmatrix}+
\begin{bmatrix}
	K(t)   	& 0\\
	0 		& b	
\end{bmatrix}
\begin{bmatrix}
	 \hat{n}(t)\\
	w(t)	
\end{bmatrix}
\right)^\intercal
\right\} \\
&=
\begin{bmatrix}
	\frac{aN}{P+N} 	& \frac{aP}{P+N}\\
	0		& a	
\end{bmatrix} 
\begin{bmatrix}
	V_{ss}(t) & V_{sx}(t)\\
	V_{xs}(t) & V_{xx}(t)	
\end{bmatrix}
\begin{bmatrix}
	\frac{aN}{P+N} 	& \frac{aP}{P+N}\\
	0		& a	
\end{bmatrix}^\intercal \\
&
\hspace{4mm}
+ 
\begin{bmatrix}
	\frac{K^2(t)N^2}{N+N_\textup{f}} & 0\\
	0 & b^2	
\end{bmatrix}.
\end{aligned}
\end{equation}
Thus, 

\begin{equation}
\label{sigmat}
\begin{aligned}
\sigma_t^2 &=  \E |\check{x}(t)|^2 \\
			&= \E|x(t)-s(t)|^2\\ 
			&= V_{xx}(t) -2V_{sx}(t)+ V_{xx}(t).
\end{aligned}
\end{equation}
Putting together (\ref{kappa}), (\ref{xhat2}), and (\ref{ss_checkx2}) -
(\ref{sigmat}) gives the desired result.

\subsection*{Proof of Theorem \ref{mainfb2}}

Define the estimate $\breve{x}(t|t-1) = \E\{x(t) | \gamma^{t-1}\}$ and let
$$\xi(t) = x(t) - \breve{x}(t|t-1)$$ 
be the estimation error. It's well known that $\breve{x}(t)$ is given by the Kalman filter
\begin{equation} 
\label{transs}
\begin{aligned}
	\breve{x}(t)		&= \breve{x}(t|t-1) + L(t)(c\xi(t)+dv(t))\\
	\breve{x}(t+1|t)	&= a\breve{x}(t)\\
						&= a\breve{x}(t|t-1) + aL(t)(c\xi(t)+dv(t))\\
	\xi(t+1) 			&= (a-aL(t)c)\xi(t) + bw(t) - aL(t)dv(t) 
\end{aligned}
\end{equation}
where $L(t)$ are the optimal Kalman filter gains for $t=0,..., T$ (see, e. g., \cite{astrom:1970}): 
\begin{equation*} 
\begin{aligned}
L(t)					&= V_{\xi \xi}(t)c(c^2V_{\xi \xi}(t) + d^2V_{vv}(t) )^{-1} \\ 
V_{\xi \xi}(t+1) 	&=  (a-aL(t)c)^2 V_{\xi \xi}(t) \\
					&+
\begin{bmatrix}
	b & -aL(t)
\end{bmatrix}
\begin{bmatrix}
	V_{ww}(t) & V_{wv}(t)\\
	V_{vw}(t) & V_{vv}(t)	
\end{bmatrix}
\begin{bmatrix}
	b & -aL(t)
\end{bmatrix}^\intercal
\end{aligned}
\end{equation*}

We also know that $\gamma^{t-1}$ and $\xi(t)$ are uncorrelated according to Proposition \ref{lse}. This implies in turn that $y^{t-1}$ and $\xi(t)$ are uncorrelated.
Hence, the averaged estimation error of the decoder is equal to

$$
\frac{1}{T}\sum_{t=1}^T \E|x(t)-\hat{x}(t)|^2 = 
\frac{1}{T}\sum_{t=1}^T \left(\E|\breve{x}(t)-\hat{x}(t)|^2 + \E|\xi(t)|^2\right).
$$

Obviously, the decoder can't do much about the error covariance $\E|\xi(t)|^2$.
The decoder $\F$ minimizes the averaged estimation error above if and only if it minimizes the
averaged estimation error of $\breve{x}(t)$. 
Thus, we have transformed the output measurement problem to a state measurement 
problem at the encoder $\G$, where the measured state is the state $\breve{x}(t)$ 
of the linear time-varying dynamical system given by 
\begin{equation*} 
\label{transs2}
	\begin{aligned}
	\breve{x}(t+1)		&= \breve{x}(t+1|t) + L(t+1)(c\xi(t+1)+dv(t+1))\\
						&= a\breve{x}(t) + L(t+1)(c\xi(t+1)+dv(t+1))\\
						&= a\breve{x}(t) + \beta(t) \omega(t)
	\end{aligned}
\end{equation*}
with $\omega(t) \sim \mathcal{N}(0,1)$ and
\begin{equation*} 
	\begin{aligned}
	\beta^2(t) &= L^2(t+1)\E\{(c\xi(t+1)+dv(t+1))^2\} \\
				&= L^2(t+1)  (c^2 V_{\xi \xi}(t+1)  + d^2 V_{vv}(t+1) )
	\end{aligned}
\end{equation*}
Inserting $b(t) = \beta(t)$ in Problem \ref{p1} and using
Theorem \ref{mainnofb2} gives the (\ref{K2})-(\ref{cov2}). 
This concludes the proof.

\subsection*{Proof of Theorem \ref{mainxhat2}}
Similar to Theorem \ref{mainnofb2}, we have that
\begin{equation}
 	\begin{aligned}
		\hat{x}(t+1) 	&= a \hat{x}(t) + K(t) \frac{\sqrt{P}}{\sigma_t} \check{x}(t) 
							+ K(t) n(t),
	\end{aligned}
 \end{equation} 

\begin{equation}
 	\begin{aligned}
		\tilde{x}(t+1) 	&= a\tilde{x}(t) - K(t) \frac{\sqrt{P}}{\sigma_t} \check{x}(t) - 
							K(t) n(t) + bw(t)\\
\end{aligned}
 \end{equation} 
with, as before, 
\begin{eqnarray}
	&K(t) &= a\sigma_t \sqrt{P}(P+N)^{-1}\nonumber \\
	&\sigma_t^2 &= \E \check{x}^2(t) 
\end{eqnarray}
Clearly, 
\begin{equation}
\label{xcheckxhat}
 	\begin{aligned}
\check{x}(t) 	&= \E\{\tilde{x}(t)  | x^t, y_\textup{f}^{t-1}\} \\
				&= \E\{\tilde{x}(t)  | x^{t-1}, w(t-1), y_\textup{f}^{t-1}\}\\
				&=  \E\{a\tilde{x}(t-1)  | x^{t-1}, w(t-1), y_\textup{f}^{t-1}\} \\
				&~~~ - K(t-1) \frac{\sqrt{P}}{\sigma_{t-1}} \check{x}(t-1) + bw(t-1)\\ 
				&=a\check{x}(t-1) + \E\{a\bar{x}(t-1)  | x^{t-1}, w(t-1), y_\textup{f}^{t-1}\} \\
				&~~~ - K(t-1) \frac{\sqrt{P}}{\sigma_{t-1}} \check{x}(t-1) + bw(t-1)\\
\end{aligned}
 \end{equation} 
The transmitter can consctruct the new measurement 
$$x(t-1) - \check{x}(t-1) - y_\textup{f}(t-1) = \bar{x}(t-1) + n_\textup{f}(t-1),$$
so
\begin{equation*}
 	\begin{aligned}
&\E \{ a\bar{x}(t-1)  | x^{t-1}, w(t-1), y_\textup{f}^{t-1}\} \\
&=\E\{a\bar{x}(t-1)  | x^{t-1}, w(t-1), y_\textup{f}^{t-2}, \bar{x}(t-1) + n_\textup{f}(t-1)\}
\end{aligned}
 \end{equation*}
Since $\bar{x}(t-1)$ and $n_\textup{f}(t-1)$ are independent of $x^{t-1}$, $w(t-1)$, and $y_\textup{f}^{t-2}$, we have that 

\begin{equation*}
 	\begin{aligned}
&\E \{ a\bar{x}(t-1)  | x^{t-1}, w(t-1), y_\textup{f}^{t-1}\} \\
&=\E \{ a\bar{x}(t-1)  | \bar{x}(t-1) + n_\textup{f}(t-1)\}
\end{aligned}
 \end{equation*}
 Let $\bar{\sigma}_t^2 = \E \bar{x}^2(t) =  \E (\tilde{x}(t) - \check{x}(t))^2 = \E \tilde{x}^2(t) -  \E \check{x}^2(t)$. Then,
\begin{equation}
\label{xbarestimate}
 	\begin{aligned}
\E \{ & a\bar{x}(t)  | \bar{x}(t) + n_\textup{f}(t)\} \\
&= a\bar{\sigma}_t^2 (\bar{\sigma}_t^2 + N_\textup{f})^{-1}(\bar{x}(t) + n_\textup{f}(t))
\end{aligned}
\end{equation}
Equation (\ref{xbarestimate}) together with (\ref{xcheckxhat}) gives
\begin{equation}
\label{xcheckxhat2}
 	\begin{aligned}
\check{x}(t+1)
				&= a\check{x}(t) - K(t) \frac{\sqrt{P}}{\sigma_{t}} \check{x}(t) + bw(t)  \\
				& ~~~+ a\bar{\sigma}_t^2 (\bar{\sigma}_t^2 + N_\textup{f})^{-1}(\bar{x}(t) + n_\textup{f}(t))\\
				&= aN(P+N)^{-1} \check{x}(t) + bw(t)  \\
				& ~~~+ a\bar{\sigma}_t^2 (\bar{\sigma}_t^2 + N_\textup{f})^{-1}(\bar{x}(t) + n_\textup{f}(t))\\
\end{aligned}
 \end{equation} 
and we can verify that
\begin{equation}
\label{xbarxhat}
 	\begin{aligned}
\bar{x}(t+1)
				&= a\bar{x}(t)  - K(t)n(t) \\
				&~~~- a\bar{\sigma}_t^2 (\bar{\sigma}_t^2 + N_\textup{f})^{-1}(\bar{x}(t) + n_\textup{f}(t))\\
				&= aN_\textup{f} (\bar{\sigma}_t^2 + N_\textup{f})^{-1}\bar{x}(t)  - K(t)n(t) \\
				&~~~ -a\bar{\sigma}_t^2 (\bar{\sigma}_t^2 + N_\textup{f})^{-1} n_\textup{f}(t)\\
\end{aligned}
 \end{equation} 
Now by substituting $bw(t) = x(t+1)-ax(t)$ and $\bar{x}(t) + n_\textup{f}(t)= x(t) - \check{x}(t) - y_\textup{f}(t)$ in (\ref{xcheckxhat2}), we
get
\begin{equation}
\label{xcheckxhat3}
 	\begin{aligned}
\check{x}(t+1)
			&= aN(P+N)^{-1} \check{x}(t) + x(t+1)-ax(t)  \\
			& ~~~+ a\bar{\sigma}_t^2 (\bar{\sigma}_t^2 + N_\textup{f})^{-1}(x(t) - \check{x}(t) - y_\textup{f}(t))\\
\end{aligned}
 \end{equation}

The dynamical system equations given by (\ref{xcheckxhat2}) and (\ref{xbarxhat}) give the dynamics of the variance values $\sigma_t^2$ and $\bar{\sigma}_t^2$
\begin{equation}
\begin{aligned}
		\sigma_{t+1}^2 = \frac{a^2 N^2}{(P + N)^{2}}\sigma_t^2 + \frac{a^2}{\bar{\sigma}_t^2 + N_\textup{f}}\bar{\sigma}_t^4 + b^2
\end{aligned}
\end{equation}
and
\begin{equation}
\begin{aligned}
		\bar{\sigma}_{t+1 }^2 = \frac{a^2 N_\textup{f}^2}{\bar{\sigma}_t^2 + N_\textup{f}}\bar{\sigma}_t^2 + \frac{a^2PN}{(P + N)^2}\sigma_t^2 
\end{aligned}
\end{equation}

\end{document}